\newcommand\bigcheck[1]{#1 \raise1ex\hbox{$\hspace{-1ex}{}^\vee$}}
\newcommand\sucheck[1]{#1 \raise0.5ex\hbox{$\hspace{-1ex}{}^\vee$}}
\newtheorem{theorem}{Theorem}[section]
\newtheorem{lemma}[theorem]{Lemma}
\newtheorem{corollary}[theorem]{Corollary}
\newtheorem{proposition}[theorem]{Proposition}
\newtheorem*{lemma*}{Lemma}
\theoremstyle{definition}
\theoremstyle{remark}
\newtheorem{remark}[theorem]{Remark}
\newcommand{\mc}[1]{{\mathcal #1}}
\newcommand{\mf}[1]{{\mathfrak #1}}
\newcommand{\mb}[1]{{\mathbb #1}}
\newcommand\tint{{\textstyle\int}}
\renewcommand{\tilde}{\widetilde}
\newcommand{\Mat}{\mathop{\rm Mat }}
\newcommand{\Der}{\mathop{\rm Der }}
\renewcommand{\ker}{\mathop{\rm Ker }}
\newcommand{\im}{\mathop{\rm Im }}
\newcommand{\Span}{\mathop{\rm Span }}
\newcommand{\codim}{\mathop{\rm codim }}
\definecolor{light}{gray}{.9}
\begin{document}


\title{On integrability of some bi-Hamiltonian two field systems of PDE}

\author{
Alberto De Sole
\thanks{Dipartimento di Matematica, 
Universit\`a di Roma ``La Sapienza'',
00185 Roma, Italy ~~
and 
Department of Mathematics, M.I.T.,
Cambridge, MA 02139, USA.~~
desole@mat.uniroma1.it ~~~~
Supported in part by PRIN and FIRB grants.
},~~
Victor G. Kac
\thanks{Department of Mathematics, M.I.T.,
Cambridge, MA 02139, USA.~~
kac@math.mit.edu~~~~
Supported in part by NSA grant.
}~~
and Refik Turhan 
\thanks{
Department of Engineering Physics,
Ankara University,
06100 Tandogan, Ankara
Turkey.~~
turhan@eng.ankara.edu.tr
}
}

\maketitle

\begin{abstract}
We continue the study of integrability of bi-Hamiltonian systems
with a compatible pair of local Poisson structures $(H_0,H_1)$,
where $H_0$ is a strongly skew-adjoint operator.
This is applied to the construction of some new two field integrable systems
of PDE by taking the pair $(H_0,H_1)$
in the family of compatible Poisson structures that arose in the study
of cohomology of moduli spaces of curves.
\end{abstract}

\section{Introduction}\label{sec:1}

The present paper is a continuation of our paper \cite{DSKT13},
where we began the study of integrability of bi-Hamiltonian PDE,
with a compatible pair of local Poisson structures $(H_0,H_1)$,
such that $H_0$ is strongly skew-adjoint.
Recall \cite{DSKT13} that a skew-adjoint $\ell\times\ell$ matrix differential operator $H$
is called \emph{strongly skew-adjoint} over an algebra of differential functions $\mc V$
if 
\begin{enumerate}[(i)]
\item
$\ker H\subset\delta\big(\mc V/\partial\mc V\big)$;
\item
$(\ker H)^\perp=\im H$,
\end{enumerate}
where $\delta$ is the variational derivative,
and $\perp$ denotes orthogonal complement with respect to the
bilinear form $\mc V^\ell\times\mc V^\ell\to\mc V/\partial\mc V$
defined by $(F,G)\mapsto\tint F\cdot G$.

First, we develop a theory of strongly skew-adjoint matrix differential operators
over a field of differential functions $\mc K$.
In particular we show that, in this case, we can replace condition (ii) by
\begin{enumerate}[(i')]
\setcounter{enumi}{1}
\item
$\dim_{\mc C}(\ker H)=\deg(H)$,
\end{enumerate}
where $\deg(H)$ is the degree of the Dieudonn\'e determinant of $H$,
and $\mc C\subset\mc K$ is the subfield of constants
(see Corollary \ref{20140407:prop1}).
This equivalent definition is more convenient if one is willing to work over a field of differential functions.

Second, we use the approach to the Lenard-Magri scheme of integrability
developed in \cite[Prop.2.9]{BDSK09}
to show that if $H_0$ is strongly skew-adjoint and
\begin{equation}\label{20140502:eq1}
C(H_0)\cap C(H_1)
\,\,\text{ has codimension } 1 \text{ in }\,\, C(H_0)
\,,
\end{equation}
where $C(H)$ denotes the space of Casimirs of $H$,
then the Lenard-Magri scheme always works beginning 
with $\tint h_0\in C(H_0)\backslash C(H_1)$
(see Corollary \ref{20140409:cor} and Remark \ref{20140409:rem}).
(Note that condition \eqref{20140502:eq1}
holds for all $(H_0,H_1)$ considered in the present paper,
but fails for the pair $(H_0,H_1)$ studied in \cite{DSKT13},
which makes the proof of integrability for the latter pair more difficult.)

We apply these results to arbitrary pairs $(H_0,H_1)$
from a $6$-parameter family of pairwise compatible
Poisson structures that arose in the study of the second cohomology
of the moduli spaces of curves.
This $6$-parameter family naturally corresponds to the Lie conformal algebra $\hat{L}$,
which is the central extension by the space of all $2$-cocycles,
with values in the base field $\mb F$,
of the Lie conformal algebra 
$$
L=\mb F[\partial]u\oplus\mb F[\partial] v\,,
$$
with $\lambda$-brackets
$$
[u_\lambda u]=(\partial+2\lambda) u
\,\,,\,\,\,\,
[u_\lambda v]=(\partial+\lambda) v
\,\,,\,\,\,\,
[v_\lambda v]=0
\,.
$$
The Lie conformal algebra $L$ in turn corresponds to the 
Lie algebra $\mf g$ of differential operators of order at most $1$ on the circle.
It was a key observation of \cite{ADCKP88}
that $H^2(\mf g,\mb F)$ is canonically isomorphic to the second
cohomology of the moduli space of curves of given genus $g\geq5$.

A systematic analysis of all pairs $(H_0,H_1)$
coming from the $6$-parameter family produces
several large families of integrable bi-Hamiltonian PDE
(see Sections \ref{sec:4} and \ref{sec:5}).
Some special cases of these PDE are well-known integrable systems
\cite{Ito82,Kup85a,Kup85b,AF88,FL96,GN90}.
Other special cases seem to be new,
like 
equation \eqref{6.1}, which is an extension of the Antonowicz-Fordy equation \cite{AF88};
equation \eqref{6.2}, which is an extension of an equation which appeared in the list of \cite{MSY87};
equation \eqref{turhan:eq}, which is an extension of 
the Fokas-Liu equation \cite{FL96}, the Kupershmidt equation \cite{Kup85b},
the Ito equation \cite{Ito82}, and the Kaup-Broer equation \cite{Kup85a};
and equations \eqref{turhan:eqb}, \eqref{6.5} and \eqref{6.6}.
Upon potentiation, some special cases of these equations
also appeared in the list of \cite{MSY87}.
All of this is discussed in Section \ref{sec:6}.

Throughout the paper all vector spaces are considered over a field $\mb F$ of characteristic zero.

We wish to thank Vladimir Sokolov for enlightening correspondence
and Pavel Etingof for bright observations.

\section{The Lenard-Magri scheme of integrality}\label{sec:2}

\subsection{Algebras of differential functions}\label{sec:2.1}

Consider
the algebra of differential polynomials
$R_\ell=\mb F[u_i^{(n)}\,|\,i\in I,n\in\mb Z_+]$,
with the derivation $\partial$ defined on generators by $\partial u_i^{(n)}=u_i^{(n+1)}$.
One has:
\begin{equation}\label{eq:comm-rel}
[\frac{\partial}{\partial u_i^{(n)}},\partial]=\frac{\partial}{\partial u_i^{(n-1)}}
\,\,\text{ for every }\,\,
i\in I,\,n\in\mb Z_+
\,.
\end{equation}
(the RHS is considered to be 0 for $n=0$).
Recall 
that an \emph{algebra of differential functions} $\mc V$ in the variables $u_i,\,i\in I=\{1,\dots,\ell\}$,
is a differential algebra extension of $R_\ell$,
endowed with commuting derivations $\frac{\partial}{\partial u_i^{(n)}}:\,\mc V\to\mc V$
extending the usual partial derivatives on $R_\ell$,
such that \eqref{eq:comm-rel} holds
and, for every $f\in\,\mc V$, we have $\frac{\partial f}{\partial u_i^{(n)}}=0$
for all but finitely many values of $i$ and $n$.
We will assume throughout the paper that $\mc V$ is a domain.

Consider the following filtration of the algebra of differential functions $\mc V$:
$$
\mc V_{m,i}\,=\,
\Big\{ f\in\mc V\,\Big|\,\frac{\partial f}{\partial u_j^{(n)}}=0
\text{ for all } (n,j)>(m,i)
\Big\}\,,
$$
where $>$ denotes lexicographic order.
%
%
The algebra of differential functions $\mc V$ is called \emph{normal} \cite{BDSK09}
if $\frac{\partial}{\partial u_i^{(m)}}(\mc V_{m,i})=\mc V_{m,i}$
for all $i\in I,m\in\mb Z_+$.
Note that any algebra of differential function can be extended to a normal one
(see \cite{DSK13}).
Examples of normal algebras of differential functions are
$R_\ell$, and $R_\ell[u_1^{-1},\log(u_1)]$.

We denote by $\mc F\subset\mc V$ the subagebra of quasiconstants:
$$
\mc F=\big\{\alpha\in\mc V\,\big|\,\frac{\partial \alpha}{\partial u_i^{(n)}}=0\,\,
\text{ for all }\,\, i\in I,n\in\mb Z_+\big\}
\,.
$$
Obviously $\mc F\subset\mc V$ is a differential subalgebra,
and we will assume, without loss of generality, that it is a differential field.
We also denote by $\mc C\subset\mc V$ the subagebra of constants:
$$
\mc C=\big\{a\in\mc V\,\big|\,\partial a=0\big\}
\,.
$$
It follows from \eqref{eq:comm-rel} that $\mc C$ is a subfield of $\mc F$.

\subsection{Local functionals, evolutionary vector fields, and variational derivative
}\label{sec:2.1b}

We call $\mc V/\partial\mc V$ the space of \emph{local functionals},
and we denote by $\tint f\in\mc V/\mc V$ the coset of $f\in\mc V$
in the quotient space.

An \emph{evolutionary vector field}
is a derivation of $\mc V$ of the form 
$$
X_P=\sum_{i\in I,n\in\mb Z_+}(\partial^n P_i)\frac{\partial}{\partial u_i^{(n)}}
\,\,,\,\,\,\,
P\in\mc V^\ell
\,.
$$ 
Evolutionary vector fields commute with $\partial$, they form a Lie subalgebra of $\Der(\mc V)$,
and the Lie bracket of two evolutionary vector fields is given by the formula 
$$
[X_P,X_Q]=X_{[P,Q]}
\,\,\text{ where }\,\,
[P,Q]=X_P(Q)-X_Q(P)
\,.
$$

The \emph{variational derivative} of $\tint f\in\mc V/\partial\mc V$ is
$\delta f=\big(\frac{\delta f}{\delta u_i}\big)_{i\in I}\in\mc V^\ell$, where
\begin{equation}\label{20130222:eq6}
\frac{\delta f}{\delta u_i}=
\sum_{n\in\mb Z_+}(-\partial)^n\frac{\partial f}{\partial u_i^{(n)}}\,.
\end{equation}

\subsection{Poisson structures}\label{sec:2.1c}

Given an $\ell\times\ell$-matrix differential operator
$H=\big(H_{ij}(\partial)\big)_{i,j\in I}\in\Mat_{\ell\times\ell}\mc V[\partial]$,
we associate the bracket
$\{\cdot\,,\,\cdot\}_H$ on $\mc V/\partial\mc V$,
given by
\begin{equation}\label{20130222:eq7}
\{\tint f,\tint g\}_H
=\tint \delta g\cdot H(\partial)\delta f\,.
\end{equation}
Note that the bracket \eqref{20130222:eq7} is skewsymmetric if and only if $H$ is skew-adjoint.
If \eqref{20130222:eq7} defines a Lie algebra bracket on $\mc V/\partial\mc V$,
then $H$ is called a \emph{Poisson structure} on $\mc V$.
Two Poisson structures $H_0$ and $H_1$ on $\mc V$
are called \emph{compatible} if their sum $H_0+H_1$
is again a Poisson structure.

Recall that 
a matrix differential operator $H\in\Mat_{\ell\times\ell}\mc V[\partial]$
is called \emph{non-degenerate} if 
it is not a left (or right) zero divisor in $\Mat_{\ell\times\ell}\mc V[\partial]$
(equivalently, if its Dieudonn\'e determinant in non-zero).

A \emph{Casimir element} for the Poisson structure $H$ is a local functional $\tint f\in\mc V/\partial\mc V$
such that $\delta f\in\ker(H)$,
i.e. the space of Casimir elements for $H$ is
\begin{equation}\label{20130222:eq9}
C(H)
=\Big\{\tint f\in\mc V/\partial\mc V\,\Big|\,H(\partial)\delta f=0\Big\}\,.
\end{equation}
\begin{remark}\label{20140408:rem}
Recall that if $H_0$ and $H_1$ are compatible Poisson structures,
then $C(H_0)$ is a Lie algebra with respect 
to the Lie bracket $\{\cdot\,,\,\cdot\}_1=\{\cdot\,,\,\cdot\}_{H_1}$ \cite{DSKT13}.
\end{remark}

\subsection{Strongly non-degenerate and strongly skew-adjoint matrix differential operators
over a differential field}\label{sec:2.1d}

For a matrix differential operator $H\in\Mat_{\ell\times\ell}\mc V[\partial]$,
we have $\im H^*\subset(\ker H)^\perp$,
where the orthogonal complement is with respect to the pairing 
$\mc V^\ell\times\mc V^\ell\to\mc V/\partial\mc V$ given by $(F,P)\mapsto\tint F\cdot P$.
We want to describe the situations when equality holds.

Throughout this section, unless otherwise specified, 
we consider instead of $\mc V$ its field of fractions $\mc K= Frac(\mc V)$.
Recall that any $\ell\times\ell$ matrix differential operator $H\in\Mat_{\ell\times\ell}\mc K[\partial]$
can be brought, by elementary row transformations,
to an upper triangular matrix $H_1\in\Mat_{\ell\times\ell}\mc K[\partial]$ 
(see e.g. \cite[Lem.3.1]{CDSK14}).
The matrix $H$ is non-degenerate if and only if $H_1$ has non-zero
diagonal entries.
In this case, the \emph{Dieudonn\`e determinant} of $H$ is
\begin{equation}\label{20140408:eq2}
\det(H)=\det{}_1(H)\xi^{\deg(H)}
\,,
\end{equation}
where $\det_1(H)\in\mc K$ is the product of the leading coefficients
of the diagonal entries of $H_1$,
and the \emph{degree} $\deg(H)$ of $H$ is the sum of the differential orders of the
diagonal entries of $H_1$.
If $H=AB$ is non-degenerate, then
\begin{equation}\label{20140408:eq2b}
\det(H)=\det(A)\det(B)
\,\,\text{ and }\,\,
\deg(H)=\deg(A)+\deg(B)
\,.
\end{equation}
Recall also that, if $H$ is non-degenerate, then 
$$
\dim_{\mc C}(\ker H)\leq\deg(H)
\,.
$$
(Note that we can always construct a differential field extension $\widetilde{\mc K}$ of $\mc K$
with the same field of constants
such that, in this extension, $\dim_{\mc C}(\ker H)=\deg(H)$,
see \cite[Lem.4.3]{CDSK13},
but, in general, this extension $\widetilde{\mc K}$ will not be an algebra of differential functions.)
For a more detailed exposition on the Dieudonn\`e determinant, see \cite{CDSK14}.
\begin{proposition}\label{20140407:prop2}
Let $H\in\Mat_{\ell\times\ell}\mc K[\partial]$ 
be a non-degenerate $\ell\times\ell$ matrix differential operator.
Then there exist 
non-degenerate matrices
$A,B\in\Mat_{\ell\times\ell}\mc K[\partial]$ 
such that 
\begin{equation}\label{20140407:eq1}
H=AB
\,\,,\,\,\,\,
\ker H=\ker B
\,\,\text{ and }\,\,
\deg(B)=\dim_{\mc C}(\ker B)
\,.
\end{equation}
\end{proposition}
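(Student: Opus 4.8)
\emph{Plan.} The idea is to realize $B$ as a matrix operator presenting the left annihilator of $\ker H$ inside the free module of row operators, and then to compute $\deg(B)$ as the $\mc K$-codimension of that annihilator. First I would put $W=\ker H\subseteq\mc K^\ell$; it is finite-dimensional over $\mc C$, say $d:=\dim_{\mc C}(\ker H)\leq\deg(H)$, and I fix a $\mc C$-basis $F_1,\dots,F_d$ of $W$. Consider
\[
M=\big\{P\in\mc K[\partial]^{1\times\ell}\,\big|\,PF_j=0,\ j=1,\dots,d\big\}
\subseteq\mc K[\partial]^{1\times\ell}
\,,
\]
the set of row operators annihilating $W$. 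Since $(aP)F_j=a(PF_j)$ and $(\partial P)F_j=\partial(PF_j)$, the set $M$ is a left $\mc K[\partial]$-submodule, hence free (as $\mc K[\partial]$ is a left principal ideal domain). Every row of $H$ lies in $M$ because $HF_j=0$. Granting for the moment that $M$ has rank $\ell$ (to be checked in the next paragraph), let $B\in\Mat_{\ell\times\ell}\mc K[\partial]$ have rows forming a basis of $M$; then $B$ is non-degenerate, $M$ is exactly the left $\mc K[\partial]$-span of the rows of $B$, so the inclusion of the rows of $H$ in $M$ yields $H=AB$ for a unique $A\in\Mat_{\ell\times\ell}\mc K[\partial]$, and $A$ is non-degenerate since $\det(H)=\det(A)\det(B)\neq0$ by \eqref{20140408:eq2b}. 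The kernel identity is then immediate: the rows of $B$ lie in $M$, so $BF_j=0$ and $W\subseteq\ker B$, while $H=AB$ gives $\ker B\subseteq\ker H=W$; hence $\ker H=\ker B$.

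It remains to prove $\deg(B)=d$, which is the crux. I would introduce the evaluation map $\mathrm{ev}\colon\mc K[\partial]^{1\times\ell}\to\mc K^d$, $P\mapsto(PF_1,\dots,PF_d)$; it is $\mc K$-linear with kernel $M$. The key claim is that $\mathrm{ev}$ is surjective, equivalently that the generalized Wronskian matrix $\big(\partial^k(F_j)_i\big)$, with rows indexed by $(i,k)$ ($1\le i\le\ell$, $k\ge0$) and columns by $j$, has rank $d$ over $\mc K$. This is the classical fact that $\mc C$-linearly independent vectors of $\mc K^\ell$ have a nonvanishing generalized Wronskian: the descending chain of spaces of relations $\{(a_j)\mid\sum_j a_j\partial^k(F_j)_i=0\text{ for all }i,\ 0\le k\le N\}$ stabilizes; if it stabilized at a nonzero space, an element of it satisfies the relation for all $k$, and differentiating a normalized such relation produces a strictly shorter one, forcing all $a_j\in\mc C$ and contradicting independence of the $F_j$. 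Given surjectivity, $\mc K[\partial]^{1\times\ell}/M\cong\mc K^d$ as $\mc K$-vector spaces; in particular this quotient is $\mc K[\partial]$-torsion, so $M$ has rank $\ell$ (justifying the choice of $B$) and $\dim_{\mc K}\big(\mc K[\partial]^{1\times\ell}/M\big)=d$. Finally, for any non-degenerate $C\in\Mat_{\ell\times\ell}\mc K[\partial]$ one has $\dim_{\mc K}\big(\mc K[\partial]^{1\times\ell}/\mc K[\partial]^{1\times\ell}C\big)=\deg(C)$: both sides vanish on $GL_\ell(\mc K[\partial])$ and are additive under composition of non-degenerate operators (the left side via the snake lemma, using that right multiplication by a non-degenerate operator is injective; the right side by \eqref{20140408:eq2b}), and they agree in the scalar case, where $\dim_{\mc K}\big(\mc K[\partial]/\mc K[\partial]c\big)=\mathrm{ord}(c)$; reduction of $C$ to upper triangular form by row operations, together with the block short exact sequence peeling off the top diagonal entry, then settles the matrix case by induction. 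Applying this with $C=B$ gives $\deg(B)=d=\dim_{\mc C}(\ker B)$, completing the argument.

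The main obstacle is this last paragraph, namely the two structural inputs: surjectivity of $\mathrm{ev}$ (full rank of the generalized Wronskian of $\mc C$-independent vectors in $\mc K^\ell$), and the identification of $\deg(C)$ with the $\mc K$-codimension of $\mc K[\partial]^{1\times\ell}C$. Both are essentially known facts about matrix differential operators over a differential field (cf. \cite{CDSK14}); if one prefers a self-contained treatment, the first is exactly the descending-chain argument sketched above, and the second follows from \eqref{20140408:eq2b} together with the reduction of $C$ to upper triangular form recalled before the proposition. Everything else—freeness of $M$, the factorization $H=AB$, non-degeneracy of $A$, and the kernel identity—is then routine.
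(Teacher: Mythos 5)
Your argument is correct, but it proves the proposition by a genuinely different route than the paper. The paper's proof is an explicit iterative factorization: for each basis vector $F_i$ of $\ker H$ it peels off a first-order factor $K_i$ of the explicit form \eqref{20140407:eq2} (division by the minimal operator annihilating one kernel vector), and sets $B=K_s\cdots K_1$, so that $\deg(B)=s$ is read off immediately from the construction. You instead characterize $B$ intrinsically as a presentation matrix for the left $\mc K[\partial]$-module $M$ of row operators annihilating $\ker H$, and compute $\deg(B)$ as $\dim_{\mc K}\big(\mc K[\partial]^{1\times\ell}/M\big)$ via two structural facts: the non-vanishing of the generalized Wronskian of $\mc C$-independent vectors (your descending-chain/differentiation argument for surjectivity of $\mathrm{ev}$ is sound), and the identity $\dim_{\mc K}\big(\mc K[\partial]^{1\times\ell}/\mc K[\partial]^{1\times\ell}C\big)=\deg(C)$ (your additivity-plus-triangular-reduction sketch is also sound, and these facts are indeed available in the spirit of \cite{CDSK14}). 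What your approach buys is a cleaner, basis-free construction of $B$ (canonical up to left multiplication by invertible matrices) that avoids the paper's slightly hand-waved step that $HF_1=0$ forces $H=H_1K_1$; what it costs is the explicitness of the factors $K_1,\dots,K_s$, which the paper reuses verbatim in the proof of Corollary \ref{20140407:prop1} to establish the inclusion \eqref{20140407:eq4} --- so if you adopted your version of the proposition you would need to rederive that step (e.g.\ by further factoring your $B$ into first-order pieces, or by a direct duality argument for $M$). As a proof of the proposition as stated, though, your argument is complete modulo the two flagged inputs, both of which you justify adequately.
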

\begin{proof}
Let $\{F_1,\dots,F_s\}$ be an ordered basis of $\ker H$, 
considered as a (finite dimensional) vector space over $\mc C$.
Let $F_1=\left(\begin{array}{l} f_1 \\ \vdots \\ f_\ell \end{array}\right)\in\ker H\backslash\{0\}$,
and assume, without loss of generality, that $f_\ell\neq0$.
We have the decomposition $H=H_1 K_1$
in $\Mat_{\ell\times\ell}\mc K[\partial]$, where
\begin{equation}\label{20140407:eq2}
K_1
=
\left(\begin{array}{llllc} 
1 & 0 & \dots & 0 & -\frac{f_1}{f_\ell} \\ 
0 & 1 & \dots & 0 & -\frac{f_2}{f_\ell} \\
\vdots&&\dots&&\vdots \\
0 & 0 & \dots & 1 & -\frac{f_{\ell-1}}{f_\ell} \\
0 & 0 & \dots & 0 & -\partial+\frac{f_\ell^\prime}{f_\ell}
\end{array}\right)
\,.
\end{equation}
(For scalar matrices, it is clear by the division algorithm.
For arbitrary matrices, the identity $H(\partial)F_1=0$
gives conditions on the entries of $H$
which are equivalent to saying that $H=H_1K_1$
for some $H_1\in\Mat_{\ell\times\ell}\mc K[\partial]$.)
Note that $\ker K_1=\mc CF_1\subset\ker H$, and $K_1(\ker H)\subset\ker H_1$.
Hence,
$K_1(\partial)F_2=\left(\begin{array}{l} g_1 \\ \vdots \\ g_\ell \end{array}\right)\in\ker H_1\backslash\{0\}$.
We can decompose, as before,
$H_1=H_2 K_2$
in $\Mat_{\ell\times\ell}\mc K[\partial]$, where
$K_2$ is as in \eqref{20140407:eq2} with $f_i$'s replaced by $g_i$'s.
Hence, $H=H_2K_2K_1$.
Repeating the same argument $s$ times,
we get
$$
H=H_sK_s\dots K_1
\,,
$$
where $K_1,\,\dots,K_s$ have all degree $1$,
and their kernels are
$$
\ker K_1=\mc C F_1
\,,\,\,
\ker K_2=\mc C K_1(\partial)F_2
,\,\dots\,,
\ker K_s=\mc C K_{s-1}(\partial)\dots K_1(\partial)F_s
\,.
$$
Hence, letting $B=K_s\dots K_1$,
we have that 
\begin{equation}\label{20140407:eq3}
\Span{}_{\mc C}\{F_1,\dots,F_s\}\subset\ker B
\,.
\end{equation}
On the other hand, we also have $\deg(B)=s$,
which implies that equality holds in \eqref{20140407:eq3}
\end{proof}
\begin{corollary}\label{20140407:prop1}
Let $H\in\Mat_{\ell\times\ell}\mc K[\partial]$ 
be a non-degenerate $\ell\times\ell$ matrix differential operator.
The following conditions are equivalent:
\begin{enumerate}[(a)]
\item
if $H$ is as in \eqref{20140407:eq1}, then $\deg(A)=0$
(i.e. $A$ is invertible in $\Mat_{\ell\times\ell}\mc K[\partial]$);
\item
$(\ker H)^\perp= \im H^*\subset\mc K^\ell$;
\item
$\dim_{\mc C}(\ker H)=\deg(H)$.
\end{enumerate}
\end{corollary}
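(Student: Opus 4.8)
The plan is to run the cycle $(a)\Rightarrow(b)\Rightarrow(c)\Rightarrow(a)$. Almost all of the work is concentrated in a single lemma: the factor $B$ produced by Proposition~\ref{20140407:prop2} \emph{always} satisfies~(b), with no hypothesis on $H$.

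The equivalence $(a)\Leftrightarrow(c)$ is immediate from Proposition~\ref{20140407:prop2}: in a factorization $H=AB$ as in~\eqref{20140407:eq1} we have $\ker H=\ker B$ and $\deg B=\dim_{\mc C}(\ker B)=\dim_{\mc C}(\ker H)$, while~\eqref{20140408:eq2b} gives $\deg H=\deg A+\deg B$, so $\deg A=\deg H-\dim_{\mc C}(\ker H)$. In particular this integer is independent of the chosen factorization, so~(a) is well posed, and it holds iff $\deg A=0$, i.e.\ iff~(c) holds. For $(a)\Rightarrow(b)$: when $\deg A=0$, putting $A$ into upper triangular form by invertible row transformations produces an upper triangular matrix whose diagonal entries have differential order $0$, hence lie in $\mc K^\times$; such a matrix is invertible in $\Mat_{\ell\times\ell}\mc K[\partial]$ (by back substitution), so $A$, and with it $A^*$, is invertible. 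Then $\ker H=\ker B$ and $\im H^*=\im(B^*A^*)=\im B^*$, and it remains only to prove that $B$ satisfies~(b).

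For the lemma, recall from the proof of Proposition~\ref{20140407:prop2} that $B=K_s\cdots K_1$ with each $K_j$ of the shape~\eqref{20140407:eq2} (determined by some vector $G=(g_1,\dots,g_\ell)^T$, $g_\ell\neq0$) and $\ker K_j=\mc C\,(K_{j-1}\cdots K_1)F_j$; a straightforward induction using these kernels gives, moreover, $\ker(K_j\cdots K_1)=\Span_{\mc C}\{F_1,\dots,F_j\}$, and hence that $K_{j-1}\cdots K_1$ maps $\ker(K_j\cdots K_1)$ \emph{onto} $\ker K_j$. The argument then has two parts. \emph{(i) Each $K$ of the form~\eqref{20140407:eq2} satisfies~(b).} Here $\ker K=\mc C G$, so $(\ker K)^\perp=\{P\,:\,g_1P_1+\cdots+g_\ell P_\ell\in\partial\mc K\}$; given such a $P$, write $g_1P_1+\cdots+g_\ell P_\ell=\partial r$ and put $Q_i=P_i$ for $i<\ell$, $Q_\ell=r/g_\ell$; a direct check (its $\ell$-th component amounting to $(g_\ell Q_\ell)'=r'$) gives $K^*Q=P$, so $(\ker K)^\perp\subseteq\im K^*$; the reverse inclusion is automatic. \emph{(ii) Composition lemma.} If $L=MN$ with $M,N$ non-degenerate, both satisfying~(b), and $N(\ker L)=\ker M$, then $L$ satisfies~(b): for $P\in(\ker L)^\perp$ we have $P\in(\ker N)^\perp=\im N^*$ since $\ker N\subseteq\ker L$, say $P=N^*R$; for any $v\in\ker M$ pick $w\in\ker L$ with $Nw=v$, so $\tint R\cdot v=\tint R\cdot Nw=\tint(N^*R)\cdot w=\tint P\cdot w=0$; hence $R\in(\ker M)^\perp=\im M^*$, $R=M^*Q$, and $P=N^*M^*Q=L^*Q\in\im L^*$. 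Applying~(ii) inductively to $K_j\cdots K_1=K_j\cdot(K_{j-1}\cdots K_1)$, with the surjectivity hypothesis supplied above, we conclude $(\ker B)^\perp=\im B^*$.

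Finally, $(b)\Rightarrow(a)$. By the lemma $(\ker H)^\perp=(\ker B)^\perp=\im B^*$ holds unconditionally, so~(b) says exactly $\im(B^*A^*)=\im B^*$; since $B^*$ induces an isomorphism $\mc K^\ell/\ker B^*\xrightarrow{\ \sim\ }\im B^*$, this amounts to $\mc K^\ell=\im A^*+\ker B^*$. Now $\ker B^*$ is finite-dimensional over $\mc C$ (as $B^*$ is non-degenerate), whereas $\mc K^\ell/\im A^*$ is infinite-dimensional over $\mc C$ whenever $\deg A\geq1$; hence the equality $\mc K^\ell=\im A^*+\ker B^*$ forces $\deg A=0$, i.e.\ (a). I expect the main obstacle to be the lemma itself: wiring part~(i) to part~(ii) requires carefully tracking the whole chain of kernels $\ker(K_j\cdots K_1)$ and their images under the partial products $K_{j-1}\cdots K_1$. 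A secondary delicate point is the infinite-dimensionality of $\mc K^\ell/\im A^*$ for $\deg A\geq1$: unlike the rest of the proof it genuinely uses that $\mc K$ is the field of fractions of an algebra of differential functions, not an arbitrary differential field (compare the parenthetical remark after the statement, on extensions $\widetilde{\mc K}$ that fail to be algebras of differential functions).
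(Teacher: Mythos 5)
Your proof is correct and follows essentially the same route as the paper's: the cycle through the factorization $H=AB$ of Proposition \ref{20140407:prop2}, with $(\ker B)^\perp=\im B^*$ obtained by peeling off the elementary factors $K_i^*$ one at a time (your composition lemma is just the paper's iterated computation packaged as an induction), and $(b)\Rightarrow(a)$ reduced to $\im A^*+\ker B^*=\mc K^\ell$ plus a finite/infinite codimension count. The one fact you assert without proof --- that $\im A^*$ has infinite codimension over $\mc C$ whenever $\deg A\geq1$ --- is exactly the paper's Lemma \ref{20140408:lem}, proved there by diagonalizing $A$ and observing that the elements $(u^{(n)})^2$, $n\gg0$, are linearly independent modulo the image of a scalar operator of positive order.
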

\begin{proof}
By \eqref{20140407:eq1}, we have
$$
\deg(H)=\deg(A)+\deg(B)
\,\,\text{ and }\,\,
\dim_{\mc C}(\ker H)=\deg(B)
\,.
$$
It follows that $\deg(H)=\dim_{\mc C}(\ker H)$ if and only if $\deg(A)=0$,
i.e. conditions (a) and (c) are equivalent.

Next, let $\{F_1,\dots,F_s\}$ be a basis of $\ker H$,
and let $K_1(\partial),\dots,K_s(\partial)\in\Mat_{\ell\times\ell}\mc K[\partial]$
be as in the proof of Proposition \ref{20140407:prop2}.
We claim that
\begin{equation}\label{20140407:eq4}
\{F_1,\dots,F_s\}^\perp\subset\im K_1^*\dots K_s^*
\,.
\end{equation}
To prove this, let 
$P=\left(\begin{array}{l} p_1 \\ \vdots \\ p_\ell \end{array}\right)\in(\ker H)^\perp$.
We have
$\tint P\cdot F_1=0$, which is equivalent to
\begin{equation}\label{20140407:eq5}
p_\ell=-\frac{f_1}{f_\ell}p_1-\dots-\frac{f_{\ell-1}}{f_\ell}p_{\ell-1}
+\Big(\partial+\frac{f_\ell^\prime}{f_\ell}\Big)\widetilde{p_{\ell}}
\,,
\end{equation}
for some $\widetilde{p_{\ell}}\in\mc K$.
But equation \eqref{20140407:eq5}
is the same as saying that $P=K_1^*(\partial)P_1$,
where $P_1=\left(\begin{array}{l} p_1 \\ \vdots \\ p_{\ell-1} \\ \widetilde{p_\ell} \end{array}\right)$.
Next, we have:
$$
\tint P\cdot F_2=0=\tint (K_1^*(\partial)P_1)\cdot F_2=\tint P_1\cdot K_1(\partial)F_2
\,.
$$
Recalling that $\ker K_2=\mc CK_1(\partial)F_2$,
and repeating the same argument as above, we get $P_1=K_2^*(\partial)P_2$,
for some $P_2\in\mc K^\ell$.
Repeating the same argument $s$ times, we conclude that
$P=K_1^*(\partial)\dots K_s^*(\partial)P_s$
for some $P_s\in\mc K^\ell$.
This proves \eqref{20140407:eq4}.

The inclusion $(\ker H)^\perp\supset\im H^*$ in (b) always holds.
Assuming that condition (a) holds, 
we have, by \eqref{20140407:eq4}, 
$$
(\ker H)^\perp=(\ker B)^\perp\subset\im B^*=\im H^*
\,,
$$
since, by (a), $A$ is invertible in $\Mat_{\ell\times\ell}\mc K[\partial]$.
Hence, (a) implies (b).

Finally, we prove that (b) implies (a).
By (b), we have $(\ker H)^\perp=\im H^*$.
We let, as in \eqref{20140407:eq1}, $H=AB$,
where $A$ and $B$ are such that 
$\ker H=\ker B$ and $\dim_{\mc C}(\ker B)=\deg(B)$.
Hence, condition (c) holds for $B$,
and since (c) implies (b), we know that 
$(\ker B)^\perp=\im B^*$.
It follows by the above observations that
$$
\im H^*=B^*(\im A^*)=\im B^*
\,.
$$
In other words, by elementary linear algebra, we have
\begin{equation}\label{20140408:eq1}
\im A^*+\ker B^*=\mc K^\ell
\,.
\end{equation}
To conclude, we observe that, since $B$ is non-degenerate,
then $\ker B^*$ is finite dimensional over $\mc C$,
and therefore condition \eqref{20140408:eq1}
implies that $\deg(A)=0$, due to the following lemma.
\begin{lemma}\label{20140408:lem}
Any matrix differential operator $L\in\Mat_{\ell\times\ell}\mc K[\partial]$
which is non-degenerate and of degree $\deg(L)\geq1$ 
has image $\im L$ of infinite codimension over $\mc C$.
\end{lemma}
\begin{proof}
First, we can reduce to the scalar case. Indeed, there exist invertible matrices
$U,V\in\Mat_{\ell\times\ell}\mc K[\partial]$ 
and a diagonal matrix $D=$diag$(d_1,\dots,d_\ell)\in\Mat_{\ell\times\ell}\mc K[\partial]$ 
such that $L=UDV$,
see e.g. \cite[Lem.3.1]{CDSK14}.
But then 
$$
\codim{}_{\mc C}(\im L)=\codim{}_{\mc C}(\im D)
=\codim{}_{\mc C}(\im d_1)+\dots+\codim{}_{\mc C}(\im d_\ell)
\,.
$$
For a scalar differential operator $L(\partial)\in\mc K[\partial]$
of order greater than or equal to $1$,
the claim follows by the simple observation that
elements $(u^{(n)})^2$, for $n>>0$,
are linearly independent in $\mc K/(\im L)$
(this follows by \cite[Eq.(1.8)]{BDSK09}).
\end{proof}
\end{proof}
\begin{remark}\label{20140502:rem}
Proposition \ref{20140407:prop2} holds over any differential field $\mc K$,
while Corollary \ref{20140407:prop1} holds over any differential field $\mc K$
satisfying Lemma \ref{20140408:lem}.
\end{remark}
%
%
%
\begin{remark}\label{20140407:prop3}
It is not hard to show that,
if $H\in\Mat_{\ell\times\ell}\mc K\partial]$ is such that $\delta C(H)=\ker H$,
then $H$ is non-degenerate.
\end{remark}
\begin{proposition}\label{20140407:prop}
Any non-degenerate skew-adjoint matrix differential operator $H$
with quasiconstant coefficients is strongly skew-adjoint
over an algebra of differential functions of the form
$\widetilde{\mc V}=\widetilde{\mc F}\otimes_{\mc F}\mc V$,
where $\widetilde{\mc F}$ is some differential field extension over $\mc F$
with the same subfield of constants,
and partial derivatives $\frac{\partial}{\partial u_i^{(n)}}$ extended from $\mc V$
to $\widetilde{\mc V}$ acting trivially on $\widetilde{\mc F}$.
\end{proposition}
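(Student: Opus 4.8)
The plan is to leave $\mc V$ untouched and enlarge only its field of quasiconstants $\mc F$, which is legitimate precisely because $H$ has coefficients in $\mc F$. Viewing $H\in\Mat_{\ell\times\ell}\mc F[\partial]$ and applying \cite[Lem.4.3]{CDSK13} to the differential field $\mc F$, one gets a differential field extension $\widetilde{\mc F}\supseteq\mc F$ with the same subfield of constants $\mc C$ and with $\dim_{\mc C}(\ker_{\widetilde{\mc F}}H)=\deg(H)$; fix a $\mc C$-basis $F_1,\dots,F_s$ ($s=\deg H$) of $\ker_{\widetilde{\mc F}}H$, so $F_j\in\widetilde{\mc F}^\ell$ for all $j$. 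Put $\widetilde{\mc V}=\widetilde{\mc F}\otimes_{\mc F}\mc V$, with $\partial$ extended and each $\frac{\partial}{\partial u_i^{(n)}}$ acting trivially on $\widetilde{\mc F}$; one checks easily that its field of quasiconstants is $\widetilde{\mc F}$. The first substantive thing to verify is that $\widetilde{\mc V}$ is again an algebra of differential functions, i.e.\ a domain, and that it has field of constants still $\mc C$ (equivalently, $\mc K:=Frac(\widetilde{\mc V})$ has constants $\mc C$). Both assertions rest on the particular form of the extension furnished by \cite{CDSK13}, built by adjoining fundamental systems of solutions of linear ODEs over $\mc F$, together with a standard Wronskian-type linear-independence argument; I expect this to be the main obstacle.

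Granting this, the kernel of $H$ does not grow: from $\ker_{\widetilde{\mc F}}H\subseteq\ker_{\widetilde{\mc V}}H\subseteq\ker_{\mc K}H$ and the bound $\dim_{\mc C}(\ker_{\mc K}H)\le\deg(H)$ (valid since $\mc K$ has constants $\mc C$) it follows that all three spaces equal $\Span{}_{\mc C}\{F_1,\dots,F_s\}$. Condition (i) of strong skew-adjointness is then immediate: if $F_j=(\alpha_{j1},\dots,\alpha_{j\ell})$ with $\alpha_{ji}\in\widetilde{\mc F}$, then, since every $\frac{\partial}{\partial u_i^{(n)}}$ annihilates quasiconstants, $\frac{\delta}{\delta u_i}\big(\sum_k\alpha_{jk}u_k\big)=\alpha_{ji}$, so $F_j=\delta\big(\sum_k\alpha_{jk}u_k\big)$ with $\sum_k\alpha_{jk}u_k\in\widetilde{\mc V}$; hence $\ker_{\widetilde{\mc V}}H\subseteq\delta\big(\widetilde{\mc V}/\partial\widetilde{\mc V}\big)$.

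For condition (ii), $(\ker_{\widetilde{\mc V}}H)^\perp=\im_{\widetilde{\mc V}}H$, only the inclusion $\subseteq$ needs proof, since $\supseteq$ holds for any skew-adjoint $H$. The idea is to re-run the constructions in the proofs of Proposition \ref{20140407:prop2} and Corollary \ref{20140407:prop1} over $\widetilde{\mc V}$, using the basis $F_1,\dots,F_s$. Because all the $F_j$, and all the vectors $K_{j-1}(\partial)\cdots K_1(\partial)F_j$ arising in the induction, are quasiconstant, the first-order operators $K_1,\dots,K_s$ of \eqref{20140407:eq2} have coefficients in $\widetilde{\mc F}$; hence $B:=K_s\cdots K_1\in\Mat_{\ell\times\ell}\widetilde{\mc F}[\partial]$, and, carrying out the decomposition of Proposition \ref{20140407:prop2} over $\widetilde{\mc F}$, the complementary factor $A$ in $H=AB$ also has coefficients in $\widetilde{\mc F}$ and, since $\deg A=\deg H-\deg B=0$, is an invertible matrix over $\widetilde{\mc F}$. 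Now let $P\in\widetilde{\mc V}^\ell$ satisfy $\tint P\cdot F_j=0$ in $\widetilde{\mc V}/\partial\widetilde{\mc V}$ for all $j$, and perform the successive resolution $P=K_1^*(\partial)P_1$, $P_1=K_2^*(\partial)P_2$, $\dots$, $P_{s-1}=K_s^*(\partial)P_s$ of the proof of \eqref{20140407:eq4}: at each step the only divisions are by nonzero quasiconstants (leading entries of the $F_j$'s), invertible in $\widetilde{\mc F}\subseteq\widetilde{\mc V}$, and the ``antiderivative'' $\widetilde{p_\ell}$ of \eqref{20140407:eq5} lies in $\widetilde{\mc V}$ precisely because the relevant pairing vanishes in $\widetilde{\mc V}/\partial\widetilde{\mc V}$; so every $P_j$ stays in $\widetilde{\mc V}^\ell$. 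Consequently $P\in\im_{\widetilde{\mc V}}(K_1^*\cdots K_s^*)=\im_{\widetilde{\mc V}}B^*=\im_{\widetilde{\mc V}}(B^*A^*)=\im_{\widetilde{\mc V}}H^*=\im_{\widetilde{\mc V}}H$, using that $A^*$ is invertible over $\widetilde{\mc V}$ and $H^*=-H$. This proves $(\ker_{\widetilde{\mc V}}H)^\perp\subseteq\im_{\widetilde{\mc V}}H$, and with it the proposition.

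In short, the substantive difficulty is the structural bookkeeping for $\widetilde{\mc V}=\widetilde{\mc F}\otimes_{\mc F}\mc V$ — that it is a domain with field of constants still $\mc C$ — on which the kernel-dimension count, and hence conditions (i) and (ii), depend. By contrast, the descent of condition (ii) from $\mc K$ down to $\widetilde{\mc V}$, though it constitutes the bulk of the verification, becomes routine once one notices that quasiconstancy of $\ker H$ keeps $K_1,\dots,K_s$, $A$ and $B$ over $\widetilde{\mc F}$ and all the $P_j$ inside $\widetilde{\mc V}^\ell$.
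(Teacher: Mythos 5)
Your proposal is correct and follows essentially the same route as the paper: extend $\mc F$ to $\widetilde{\mc F}$ (same constants) so that $\dim_{\mc C}\ker H=\deg H$, note that the kernel consists of quasiconstant vectors so condition (i) is immediate, and re-run the factorization $H=AB$ of Proposition \ref{20140407:prop2} and the resolution $P=K_1^*\cdots K_s^*P_s$ of Corollary \ref{20140407:prop1} over $\widetilde{\mc V}$, which works because all divisions are by invertible quasiconstants. The only difference is that you spell out the details the paper compresses into ``the same argument shows\dots'' (in particular why the $K_i$ stay over $\widetilde{\mc F}$ and why the antiderivatives stay in $\widetilde{\mc V}$), and you explicitly flag the domain/constants bookkeeping for $\widetilde{\mc F}\otimes_{\mc F}\mc V$, which the paper takes for granted.
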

\begin{proof}
By assumption $H\in\Mat_{\ell\times\ell}\mc F[\partial]$.
Hence, $\ker H\subset\mc F^\ell$,
and there exists 
a differential field extension $\widetilde{\mc F}$ of $\mc F$
with the same subfield of constants $\mc C$
such that $\ker H$ in $\widetilde{\mc F}^\ell$ has dimension over $\mc C$
equal to $\deg(H)$.
%
Therefore, $H$ satisfies condition (c) of Corollary \ref{20140407:prop1}
over $\widetilde{\mc F}$.
The same argument as in the proof of Proposition \ref{20140407:prop2} 
and Corollary \ref{20140407:prop1} shows that condition (c) of Corollary \ref{20140407:prop1}
implies condition (b),
where $\mc K$ is replaced by $\widetilde{\mc V}$
Hence, we obtain that $(\ker H)^\perp\subset \im H^*$ over $\widetilde{\mc V}$.
We don't assume in this argument that $H$ is skewadjoint,
but if it is, we have $(\ker H)^\perp= \im H$
since, for skewadjoint operators, the opposite inclusion always holds.
To conclude, we clearly have $C(H)=\{\tint F\cdot u\,|\,F\in\ker H\}$,
so that $\ker H=\delta C(H)$.

\end{proof}

\subsection{Hamiltonian equations and integrability}\label{sec:2.2}

Recall that the \emph{Hamiltonian partial differential equation}
for the Poisson structure $H$ on $\mc V$
and the Hamiltonian functional $\tint h\in\mc V/\partial\mc V$ 
is the following evolution equation in ${\bf u}=(u_i)_{i\in I}$:
\begin{equation}\label{20130222:eq10}
\frac{d{\bf u}}{dt}
=
H(\partial)\delta h
\,.
\end{equation}
An \emph{integral of motion} for the Hamiltonian equation \eqref{20130222:eq10}
is a local functional $\tint f\in\mc V/\partial\mc V$ such that $\{\tint h,\tint f\}_H=0$.
Equation \eqref{20130222:eq10} is said to be \emph{integrable}
if there is an infinite sequence of linearly independent integrals of motion 
$\tint h_0=\tint h,\tint h_1,\tint h_2,\dots$
in involution: $\{\tint h_m,\tint h_n\}_H=0$ for all $m,n\in\mb Z_+$.
In this case, we have an \emph{integrable hierarchy} of Hamiltonian PDE:
$$
\frac{d{\bf u}}{dt_n}=H(\partial)\delta h_n
\,\,,\,\,\,\,
n\in\mb Z_+
\,.
$$

\subsection{Lenard-Magri scheme}\label{sec:2.3}

One of the main tools to prove integrability is the following well-known result.
\begin{proposition}
Let $\mc V$ be an algebra of differential functions,
and let $H_0,H_1\in\Mat_{\ell\times\ell}\mc V[\partial]$ be skew-adjoint matrix differential operators.
Denote by $\{\cdot\,,\,\cdot\}_{0},\,\{\cdot\,,\,\cdot\}_{1}$
be the corresponding brackets on $\mc V/\partial\mc V$
(cf. \eqref{20130222:eq7}).
Let 
$\{\tint h_n\}_{n\in\mb Z_+}\subset\mc V/\partial\mc V$ 
be an infinite sequence satisfying the following recurrence relation
\begin{equation}\label{20140404:eq3}
H_1(\partial)\delta h_{n-1}
=
H_0(\partial)\delta h_n
\,,
\end{equation}
for every $n\in\mb Z_+$
(where, for $n=0$, we let $\tint h_{-1}=0$).
Then 
\begin{enumerate}[(a)]
\item
The elements $\tint h_n$ are integrals of motion in involution
with respect to both brackets $\{\cdot\,,\,\cdot\}_{0,1}$:
\begin{equation}\label{20140404:eq6}
\big\{\tint h_m,\tint h_n\big\}_{0,1}=0
\,\,\text{ for all }\,\, m,n\in\mb Z_+
\,.
\end{equation}
\item
Let 
$\{\tint g_n\}_{n=0}^N\subset\mc V/\partial\mc V$
be any other sequence satisfying the same recurrence relations \eqref{20140404:eq3}
for $n=0,\dots,N$.
Then
\begin{equation}\label{20140404:eq6b}
\big\{\tint h_m,\tint g_n\big\}_{0,1}=0
\,\,\text{ for all }\,\, m,n\in\mb Z_+
\,.
\end{equation}
\item
If $H_0$ is a Poisson structure over $\mc V$,
then the evolutionary vector fields $X_{P_n}$, where $P_n=H_0\delta h_n$,
commute.
\end{enumerate}
\end{proposition}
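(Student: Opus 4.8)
The plan is to prove part~(b) directly by an index-shifting argument — part~(a) is then the special case $\tint g_n=\tint h_n$ with $N$ arbitrary — and afterwards to deduce part~(c) from part~(a) by invoking the standard fact that the Hamiltonian vector fields of a Poisson structure close under commutator.

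\emph{Step 1: two elementary identities.} For $m\in\mb Z_+$, the recurrence \eqref{20140404:eq3} for $\{\tint h_n\}$ (with the convention $\tint h_{-1}=0$) gives
\begin{equation*}
\{\tint h_m,\tint g_n\}_0
=\tint\delta g_n\cdot H_0(\partial)\delta h_m
=\tint\delta g_n\cdot H_1(\partial)\delta h_{m-1}
=\{\tint h_{m-1},\tint g_n\}_1
\,.
\end{equation*}
On the other hand, using the skew-symmetry of the brackets $\{\cdot\,,\,\cdot\}_0$ and $\{\cdot\,,\,\cdot\}_1$ (which holds since $H_0,H_1$ are skew-adjoint) together with the recurrence for $\{\tint g_n\}$ (with $\tint g_{-1}=0$), available for $0\leq n\leq N$,
\begin{equation*}
\{\tint h_m,\tint g_n\}_0
=-\{\tint g_n,\tint h_m\}_0
=-\tint\delta h_m\cdot H_0(\partial)\delta g_n
=-\tint\delta h_m\cdot H_1(\partial)\delta g_{n-1}
=-\{\tint g_{n-1},\tint h_m\}_1
=\{\tint h_m,\tint g_{n-1}\}_1
\,.
\end{equation*}
Comparing the two displays yields the key relation $\{\tint h_{m-1},\tint g_n\}_1=\{\tint h_m,\tint g_{n-1}\}_1$, valid for all $m\in\mb Z_+$ and $0\leq n\leq N$.

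\emph{Step 2: boundary values and propagation.} Setting $n=0$ in the second identity of Step~1 and using $\tint g_{-1}=0$ gives $\{\tint h_m,\tint g_0\}_0=0$ for all $m$, and then the first identity gives $\{\tint h_m,\tint g_0\}_1=0$ for all $m\in\mb Z_+$ (and trivially for the formal index $m=-1$). Iterating the relation of Step~1 so as to lower the second index, we get
\begin{equation*}
\{\tint h_m,\tint g_n\}_1
=\{\tint h_{m+1},\tint g_{n-1}\}_1
=\dots
=\{\tint h_{m+n},\tint g_0\}_1
=0
\,,
\end{equation*}
for all $m\in\mb Z_+$ and $0\leq n\leq N$; substituting this back into the first identity of Step~1 gives $\{\tint h_m,\tint g_n\}_0=\{\tint h_{m-1},\tint g_n\}_1=0$ as well. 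This establishes part~(b), and hence part~(a).

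\emph{Step 3: commuting flows, and the main obstacle.} For part~(c) I would use the well-known fact that, when $H_0$ is a Poisson structure, the map $\tint f\mapsto X_{H_0(\partial)\delta f}$ is a Lie algebra homomorphism from $\big(\mc V/\partial\mc V,\{\cdot\,,\,\cdot\}_0\big)$ to the Lie algebra of evolutionary vector fields, that is, $[X_{H_0\delta f},X_{H_0\delta g}]=X_{H_0\delta\{\tint f,\tint g\}_0}$ (see \cite{BDSK09}). Taking $f=h_m$, $g=h_n$ and using part~(a), so that $\{\tint h_m,\tint h_n\}_0=0$ and hence $\delta\{\tint h_m,\tint h_n\}_0=0$, we conclude $[X_{P_m},X_{P_n}]=X_{H_0\delta\{\tint h_m,\tint h_n\}_0}=0$. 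All the computations in Steps~1--2 are routine once the conventions are fixed; the delicate point is the bookkeeping of the boundary values $\tint h_{-1}=\tint g_{-1}=0$ and of the range $0\leq n\leq N$ in which the second identity of Step~1 holds — this is precisely what forces the iteration in Step~2 to decrease the $\tint g$-index rather than the $\tint h$-index. In Step~3 the only substantive ingredient is the homomorphism property of Hamiltonian vector fields, which is where the Jacobi identity for $H_0$ is used; everything else is formal.
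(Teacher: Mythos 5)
Your proof is correct, and it is essentially the standard Lenard--Magri argument: the paper itself does not write out a proof but defers to \cite{Mag78}, \cite{BDSK09} and \cite{DSKT13}, and the index-shifting identity $\{\tint h_{m-1},\tint g_n\}_1=\{\tint h_m,\tint g_{n-1}\}_1$ together with the boundary values $\tint h_{-1}=\tint g_{-1}=0$ is precisely the argument found there, with part (c) reduced, as in \cite[Prop.1.2.4(c)]{BDSK09}, to the homomorphism property of Hamiltonian vector fields. The bookkeeping (iterating so as to lower the $\tint g$-index, which is only defined up to $N$, while raising the $\tint h$-index, which is defined for all $n$) is handled correctly.
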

\begin{proof}
A proof can be found in \cite{Mag78} for part (a) and in \cite{BDSK09} for part (b) 
(or both in \cite{DSKT13}).
For a proof of (c), see e.g. \cite[Prop.1.2.4(c)]{BDSK09}.
\end{proof}

The following result provides a way to solve, recursively, the Lenard-Magri relations \eqref{20140404:eq3},
and therefore to prove integrability.
\begin{theorem}[\cite{BDSK09,DSKT13}]\label{20140403:thm}
Let $\mc V$ be an algebra of differential functions (which is assumed to be a domain),
and let $\widetilde{\mc V}$ be a normal extension of $\mc V$.
Let $H_0,H_1\in\Mat_{\ell\times\ell}\mc V[\partial]$ be two compatible Poisson structures 
on $\mc V$.
Assume that $H_0$ is strongly skew-adjoint over $\mc V$.
Let $\tint h_0\in C(H_0)$.
Then: 
\begin{enumerate}[(a)]
\item
There exists $\tint h_1\in\widetilde{\mc V}/\partial\widetilde{\mc V}$ 
such that $\delta h_1\in\mc V^\ell$ and
\begin{equation}\label{20140404:eq2}
H_1(\partial) \delta h_0
=
H_0(\partial) \delta h_1
\,,
\end{equation}
if and only if
\begin{equation}\label{20140404:eq1}
\big\{\tint h_0,C(H_0)\big\}_1=0
\,.
\end{equation}
\item
Suppose, inductively, that 
$\tint h_0\in C(H_0),\,\tint h_1,\dots,\tint h_N\in\widetilde{\mc V}/\partial\widetilde{\mc V}$,
with $\delta h_i\in\mc V^\ell$,
satisfy the recurrence relations \eqref{20140404:eq3}
for every $n=0,\dots,N$ (where we denote $\tint h_{-1}=0$).
Then
\begin{equation}\label{20140404:eq4}
\big\{\tint h_N,C(H_0)\big\}_1
\subset C(H_0)
\,.
\end{equation}
\item
Suppose, in addition, that
\begin{equation}\label{20140404:eq4b}
\big\{\tint h_N,C(H_0)\big\}_1=0
\,.
\end{equation}
Then, there exists $\tint h_{N+1}\in\widetilde{\mc V}/\partial\widetilde{\mc V}$
such that $\delta h_{N+1}\in\mc V^\ell$ and
\begin{equation}\label{20140404:eq5}
H_1(\partial) \delta h_N
=
H_0(\partial) \delta h_{N+1}
\,.
\end{equation}
\item
If \eqref{20140404:eq2} holds
and 
\begin{equation}\label{20140409:eq1}
\{\delta h_0,\delta h_1\}^\perp\subset \im H_0
\,,
\end{equation}
then the Lenard-Magri recursive equation \eqref{20140404:eq3}
has solution $\tint h_n\in\widetilde{\mc V}/\partial\widetilde{\mc V}$,
with $\delta h_n\in\mc V^\ell$,
for each $n\in\mb Z_+$.
\end{enumerate}
\end{theorem}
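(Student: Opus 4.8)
The plan is to treat parts (a)--(d) of Theorem \ref{20140403:thm} in order, since each feeds the next, with the main new content being part (d) which bootstraps a single Lenard--Magri step into the full infinite hierarchy.

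For part (a), I would use the abstract criterion of \cite[Prop.2.9]{BDSK09}: a solution $\delta h_1$ to $H_1\delta h_0 = H_0\delta h_1$ with $\delta h_1$ closed (hence, after passing to the normal extension $\widetilde{\mc V}$, exact) exists if and only if $H_1\delta h_0 \in \im H_0$ \emph{and} the relevant compatibility/closedness obstruction vanishes. Since $H_0$ is strongly skew-adjoint over $\mc V$, condition (ii) of strong skew-adjointness gives $(\ker H_0)^\perp = \im H_0$, so $H_1\delta h_0 \in \im H_0$ is equivalent to $\tint F \cdot H_1\delta h_0 = 0$ for all $F \in \ker H_0$; writing $F = \delta f$ with $\tint f \in C(H_0)$ (using condition (i), $\ker H_0 \subset \delta(\mc V/\partial\mc V)$) this reads $\tint \delta f \cdot H_1 \delta h_0 = \{\tint h_0, \tint f\}_1 = 0$, i.e.\ \eqref{20140404:eq1}. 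I would then check that compatibility of $H_0, H_1$ forces the closedness obstruction to vanish automatically once \eqref{20140404:eq1} holds, so that \eqref{20140404:eq1} is both necessary and sufficient; the normality of $\widetilde{\mc V}$ is what upgrades "closed" to "exact" and produces the local functional $\tint h_1$.

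For part (b), the inductive claim $\{\tint h_N, C(H_0)\}_1 \subset C(H_0)$: given $\tint f \in C(H_0)$, I want $\{\tint h_N, \tint f\}_1 \in C(H_0)$, i.e.\ its variational derivative lies in $\ker H_0$. The tool is the bi-Hamiltonian identity together with the recurrence relations \eqref{20140404:eq3} satisfied up to level $N$; one manipulates $\{\tint h_N, \tint f\}_1$ using $H_1 \delta h_N = \cdots$ is not yet known, but $H_0 \delta h_N = H_1 \delta h_{N-1}$ is, and using compatibility (the Schouten bracket $[H_0, H_1] = 0$) one rewrites $H_0 \delta(\{\tint h_N,\tint f\}_1)$ in terms of things that vanish because $\tint f$ is a Casimir of $H_0$ and because of the lower recurrences. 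This is exactly the computation in \cite{DSKT13}, so I would cite it and reproduce only the key identity. Part (c) is then immediate: if in addition \eqref{20140404:eq4b} holds, then the hypothesis of part (a) (with $\tint h_0$ replaced by $\tint h_N$, which by (b)--(c) of the earlier Proposition is a Casimir of $H_0$ once the recurrence closes — more precisely $H_0 \delta h_N \in \im H_1$ and one re-runs the part (a) argument) is satisfied, yielding $\tint h_{N+1}$.

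The heart is part (d), and the hard part will be propagating the single orthogonality hypothesis \eqref{20140409:eq1}, $\{\delta h_0, \delta h_1\}^\perp \subset \im H_0$, through all levels so that (b) and (c) can be applied repeatedly. The strategy: by Remark \ref{20140408:rem}, $C(H_0)$ is a Lie algebra under $\{\cdot,\cdot\}_1$, and $\tint h_0, \tint h_1$ generate (via part (b)) an increasing chain; I would argue that the span of $\delta h_0$ and $\delta h_1$ together with $\ker H_0$ controls all subsequent Casimir-pairings. Concretely, I expect to show by induction that $\{\tint h_N, C(H_0)\}_1 = 0$ follows from \eqref{20140409:eq1}: one reduces $\{\tint h_N, \tint f\}_1$, via the recurrences, to a pairing involving $\{\delta h_0, \delta h_1\}$ and elements of $\ker H_0$, and \eqref{20140409:eq1} precisely says such pairings vanish modulo $\im H_0$, hence the bracket is zero in $\mc V/\partial\mc V$. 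Once $\{\tint h_N, C(H_0)\}_1 = 0$ is established for all $N$, part (c) supplies $\tint h_{N+1}$ at each stage, completing the induction and producing the infinite sequence $\{\tint h_n\}_{n\in\mb Z_+}$ with $\delta h_n \in \mc V^\ell$. The delicate point is bookkeeping: making sure each $\delta h_n$ computed in $\widetilde{\mc V}$ actually lands back in $\mc V^\ell$ (this is where strong skew-adjointness \emph{over $\mc V$}, not merely over $\mc K$, is essential) and that the orthogonality condition is genuinely inherited rather than needing to be re-imposed at every step.
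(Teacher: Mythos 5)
Your proposal follows essentially the same route as the paper: parts (a) and (c) reduce to strong skew-adjointness identifying $\ker H_0$ with $\delta C(H_0)$ and $(\ker H_0)^\perp$ with $\im H_0$ (with normality of $\widetilde{\mc V}$ supplying exactness), part (b) is delegated to \cite{DSKT13}, and part (d) is the argument of \cite[Cor.2.12]{BDSK09}. Two small corrections: in (c) the intermediate functional $\tint h_N$ is \emph{not} a Casimir of $H_0$ (that would trivialize the hierarchy) --- the part (a) argument simply applies verbatim with $\delta h_N$ in place of $\delta h_0$, no Casimir property needed; and in (d) the detour through $\big\{\tint h_N,C(H_0)\big\}_1=0$ is unnecessary, since the involution relations \eqref{20140404:eq6} already give $H_1\delta h_N\perp\{\delta h_0,\delta h_1\}$, whence \eqref{20140409:eq1} places $H_1\delta h_N$ in $\im H_0$ directly.
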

\begin{proof}
Condition \eqref{20140404:eq1}
is equivalent to saying that $H_1(\partial) \delta h_0$
is orthogonal to $\frac{\delta}{\delta u}C(H_0)$
with respect to the form $(F,P)\mapsto\tint F\cdot P$.
Hence, by the strong skew-adjointness of $H_0$,
this is in turn equivalent to $H_1(\partial) \delta h_0 \in\im(H_0)$,
proving (a).
Part (b) is an immediate consequence of the compatibility of $H_0$ and $H_1$
and the assumption that $\tint h_0\in C(H_0)$.
(More details can be found in \cite{DSKT13}.)
The proof of (c) is the same as for part (a).
Finally, claim (d) is \cite[Cor.2.12]{BDSK09}.
\end{proof}
The following special case will be used to prove integrability in the present paper.
\begin{corollary}\label{20140409:cor}
Let $\mc V$ be an algebra of differential functions (which is assumed to be a domain),
and let $\widetilde{\mc V}$ be a normal extension of $\mc V$.
Let $H_0,H_1\in\Mat_{\ell\times\ell}\mc V[\partial]$ be two compatible Poisson structures 
on $\mc V$.
Assume that $H_0$ is strongly skew-adjoint.
Suppose that 
\begin{equation}\label{20140409:eq2}
\dim_{\mc C}C(H_0)=2
\,\,\text{ and }\,\,
C(H_0)\cap C(H_1)\neq0
\,.
\end{equation}
Let $\{\tint h_0,\tint h_1\}$ be a basis of $C(H_0)$ over $\mc C$,
where $\tint h_0\in C(H_0)\cap C(H_1)$.
Then, the Lenard-Magri recursive equation \eqref{20140404:eq3}
has solution $\tint h_n\in\widetilde{\mc V}/\partial\widetilde{\mc V}$,
with $\delta h_n\in\mc V^\ell$,
for each $n\in\mb Z_+$.
\end{corollary}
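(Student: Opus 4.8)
The plan is to apply Theorem \ref{20140403:thm}, so the work reduces to verifying its hypotheses with our chosen $\tint h_0 \in C(H_0) \cap C(H_1)$, and then feeding the conclusion of part (b) into part (d). First I would check the base step of the Lenard--Magri recursion: since $\tint h_0 \in C(H_0) \cap C(H_1)$ and $H_0$ is strongly skew-adjoint, I would use part (a) of Theorem \ref{20140403:thm}. Its hypothesis is $\{\tint h_0, C(H_0)\}_1 = 0$; but by Remark \ref{20140408:rem}, $C(H_0)$ is a Lie algebra under $\{\cdot,\cdot\}_1$, and $\tint h_0$ is central in it precisely because $\delta h_0 \in \ker H_1$, i.e.\ $\{\tint h_0, \tint f\}_1 = \tint \delta f \cdot H_1 \delta h_0 = 0$ for all $\tint f$, hence in particular for $\tint f \in C(H_0)$. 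So \eqref{20140404:eq1} holds and there is $\tint h_1 \in \widetilde{\mc V}/\partial\widetilde{\mc V}$ with $\delta h_1 \in \mc V^\ell$ and $H_1 \delta h_0 = H_0 \delta h_1$, which is \eqref{20140404:eq2}.

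Next I would invoke part (d) of Theorem \ref{20140403:thm}: it says that once \eqref{20140404:eq2} holds, the full recursion \eqref{20140404:eq3} is solvable provided $\{\delta h_0, \delta h_1\}^\perp \subset \im H_0$. By the strong skew-adjointness of $H_0$ (specifically condition (i) in its definition, $\ker H_0 \subset \delta(\mc V/\partial\mc V)$, together with (ii), $(\ker H_0)^\perp = \im H_0$), this orthogonality condition is equivalent to $\{\delta h_0, \delta h_1\}$ being orthogonal to $\ker H_0 = \delta C(H_0)$. Now here is where the hypothesis $\dim_{\mc C} C(H_0) = 2$ enters decisively: since $\{\tint h_0, \tint h_1\}$ is a $\mc C$-basis of $C(H_0)$, it suffices to check that the bracket $[\delta h_0, \delta h_1]$ pairs to zero against both $\delta h_0$ and $\delta h_1$. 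The first pairing, $\tint [\delta h_0,\delta h_1]\cdot \delta h_0$, vanishes because it equals (up to sign and total derivatives) $\{\tint h_0, \{\tint h_0, \tint h_1\}_0\}_0$-type expressions that collapse using $\delta h_0 \in \ker H_0$; more directly, $\tint \delta h_0 \cdot [\delta h_0, \delta h_1]$ is a statement about the vector field $X_{\delta h_0}$ which I would handle via the standard identity $\{[P,Q]\} \cdot R$ being cyclic-antisymmetric, reducing everything to terms with a factor $H_0 \delta h_0 = 0$ or $H_0 \delta h_1 = H_1 \delta h_0$.

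The key observation I want to exploit is that $\{\tint h_N, C(H_0)\}_1 \subset C(H_0)$ (part (b) of Theorem \ref{20140403:thm}) combined with $\dim_{\mc C} C(H_0) = 2$ forces the bracket to land in a $2$-dimensional space; and because $\{\cdot,\cdot\}_1$ is a skew-symmetric bracket on that $2$-dimensional Lie algebra $C(H_0)$ for which $\tint h_0$ is central, the derived subalgebra $[C(H_0), C(H_0)]_1$ is at most $1$-dimensional and in fact lies inside $\mc C \cdot (\text{something})$; but since $\tint h_0$ is central, $\{\tint h_1, C(H_0)\}_1 = \{\tint h_1, \mc C \tint h_0 + \mc C \tint h_1\}_1 = 0$. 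This is the cleanest route: apply part (b) with $N=1$ to get $\{\tint h_1, C(H_0)\}_1 \subset C(H_0)$, then observe directly that $\{\tint h_1, \tint h_0\}_1 = 0$ (by centrality of $\tint h_0$) and $\{\tint h_1, \tint h_1\}_1 = 0$ (skew-symmetry), so $\{\tint h_1, C(H_0)\}_1 = 0$, which is \eqref{20140409:eq1} in disguise after translating back through strong skew-adjointness. I expect the main obstacle to be precisely this translation: showing rigorously that $\{\delta h_0,\delta h_1\}^\perp \subset \im H_0$ is equivalent to $\{\tint h_1, C(H_0)\}_1 = 0$ (and the analogous pairing with $\tint h_0$), i.e.\ carefully matching the bracket $[\delta h_0, \delta h_1]$ of evolutionary derivatives against the Lie brackets $\{\cdot,\cdot\}_0$ and $\{\cdot,\cdot\}_1$ of local functionals using $H_0 \delta h_1 = H_1 \delta h_0$; this is the computation done in \cite[Cor.2.12]{BDSK09} and the reduction there should apply verbatim once the two scalar pairings are shown to vanish. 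Everything else is bookkeeping: $\tint h_0 \in C(H_0) \setminus C(H_1)$ would be the interesting generic case matching Remark \ref{20140409:rem}, but here we are explicitly given $\tint h_0 \in C(H_0) \cap C(H_1) \neq 0$, which makes the base step immediate.
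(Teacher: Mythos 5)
Your overall strategy --- verify \eqref{20140404:eq2}, then invoke Theorem \ref{20140403:thm}(d) --- is exactly the paper's route, and your handling of the base step is fine (indeed it is even simpler than you make it: $H_1\delta h_0=0$ since $\tint h_0\in C(H_1)$ and $H_0\delta h_1=0$ since $\tint h_1\in C(H_0)$, so \eqref{20140404:eq2} holds for the given $\tint h_1$ with no appeal to part (a)). The problem is your verification of condition \eqref{20140409:eq1}. In that condition, $\{\delta h_0,\delta h_1\}$ denotes the two-element \emph{set}, and $\{\delta h_0,\delta h_1\}^\perp$ is its orthogonal complement with respect to $(F,P)\mapsto\tint F\cdot P$; it is not a Lie bracket of evolutionary vector fields. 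Your argument --- that it suffices to show ``the bracket $[\delta h_0,\delta h_1]$ pairs to zero against $\delta h_0$ and $\delta h_1$'' --- therefore addresses a different (and irrelevant) statement, and the subsequent discussion of the derived subalgebra of $(C(H_0),\{\cdot,\cdot\}_1)$ does not establish \eqref{20140409:eq1} either. The correct verification is immediate and is where the hypothesis $\dim_{\mc C}C(H_0)=2$ actually enters: by condition (i) of strong skew-adjointness, $\ker H_0\subset\delta(\mc V/\partial\mc V)$, so $\ker H_0=\delta C(H_0)=\Span_{\mc C}\{\delta h_0,\delta h_1\}$; hence $\{\delta h_0,\delta h_1\}^\perp=(\ker H_0)^\perp=\im H_0$ by condition (ii).

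A secondary issue: your proposed alternative of iterating parts (b) and (c) cannot replace part (d). Your centrality/skew-symmetry argument gives $\{\tint h_1,C(H_0)\}_1=0$ only because $\tint h_1\in C(H_0)$; for $N\ge2$ the functionals $\tint h_N$ are no longer Casimirs of $H_0$, part (b) only yields $\{\tint h_N,C(H_0)\}_1\subset C(H_0)$, and there is no analogous reason for this to vanish at each step. So the inductive route stalls after one step, and part (d) (with the once-and-for-all condition \eqref{20140409:eq1}, correctly interpreted) is genuinely needed.
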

\begin{proof}
Equation \eqref{20140404:eq2} trivially holds,
and condition \eqref{20140409:eq1} follows by the assumption that $H_0$
is strongly skew-adjoint.
Hence, the claim follows form Theorem \ref{20140403:thm}(d).
\end{proof}
\begin{remark}\label{20140409:rem}
Corollary \ref{20140409:cor} still holds if conditions \eqref{20140409:eq2}
are replaced by the assumption that $C(H_0)\cap C(H_1)$
has codimension at most $1$ in $C(H_0)$.
\end{remark}

\section{The $6$-parameter family of compatible Poisson structures}\label{sec:3}

Consider the Lie conformal algebra $L=\mb F[\partial]u\oplus\mb F[\partial]v$
over a field $\mb F$, with the $\lambda$-bracket
$$
\{u_\lambda u\}=(\partial+2\lambda)u
\,\,,\,\,\,\,
\{u_\lambda v\}=(\partial+\lambda)v
\,\,,\,\,\,\,
\{v_\lambda v\}=0
\,.
$$
(It is the semidirect product of the Virasoro conformal algebra and its representation
on currents.)
The space of two-cocycles on $L$ with values in $\mb F$ is $5$-dimensional,
and the generic two-cocycle $\psi_\lambda:\,L\times L\to\mb F[\lambda]$ looks as follows
$$
\begin{array}{l}
\displaystyle{
\vphantom{\Big(}
\psi_\lambda(u,u)=\alpha\lambda+c\lambda^3
\,\,,\,\,\,\,
\psi_\lambda(u,v)=\beta\lambda-\gamma\lambda^2
\,,} \\
\displaystyle{
\vphantom{\Big(}
\psi_\lambda(v,u)=\beta\lambda+\gamma\lambda^2
,\,
\psi_\lambda(v,v)=\epsilon\lambda
\,,}
\end{array}
$$
for $\alpha,\beta,\gamma,\epsilon,c\in\mb F$.
As a result, we get a central extension $\hat{L}$ of $L$ corresponding to this two-cocycle.
The corresponding $\lambda$-bracket on $\hat{L}$
give rise to the following $6$-parameter family of 
(pairwise compatible) Poisson structures on 
$\mb F[u,v,u',v',\dots]$:
\begin{equation}\label{20140404:eq7}
G^{(c,a,\gamma)}_{(\alpha,\beta,\epsilon)}(\partial)
= 
\left(
\begin{array}{cc} 
 a(u'+2u\partial)+\alpha \partial + c\partial^3 & a\, v \partial+\beta \partial + \gamma \partial^2 \\
 a \partial\circ v + \beta \partial -\gamma \partial^2 & \epsilon \partial
\end{array}\right)
\,.
\end{equation}
Throughout the rest of the paper, 
we will denote by $\mc V$ an algebra of differential functions
extension of $\mb F[u,v,u',v',\dots]$,
with subfields of constants and quasiconstant $\mc C=\mc F=\mb F$, unless otherwise specified.
We also assume that $\mc V$ contains all the expressions that will appear
in the text.
For example, if we encounter an element of the form $\frac{u}{v}$,
we assume that $\mc V$ is an extension of $\mb F[u,v^{\pm1},u',v',\dots]$.

\subsection{The case $a\neq0$}

For $a\neq0$ in \eqref{20140404:eq7}, after rescaling we can set $a=1$.
Moreover, in this case after replacing $u$ by $u-\frac{\alpha}{2}$ and $v$ by $v-\beta$,
we may assume that $\alpha=\beta=0$.
Hence, without loss of generality,
we get the following $3$-parameter family of Poisson structures:
\begin{equation}\label{20140404:eq8}
H_{(c,\gamma,\epsilon)}(\partial)
= 
\left(
\begin{array}{cc} 
 u'+2u\partial + c\partial^3 & v \partial + \gamma \partial^2 \\
 \partial\circ v -\gamma \partial^2 & \epsilon \partial
\end{array}\right)
\,.
\end{equation}
We introduce the following important quantities:
\begin{equation}\label{20140408:eq4}
p=\epsilon c+\gamma^2\in\mb F
\,\,\text{ and }\,\,
Q=\epsilon u-\frac12v^2-\gamma v'\in\mc V
\,.
\end{equation}
\begin{lemma}\label{20140404:lema}
We have
\begin{equation}\label{20140404:eq8b}
\left(
\begin{array}{cc} 
\epsilon & -v-\gamma\partial \\
0 & 1
\end{array}\right)
H_{(c,\gamma,\epsilon)}(\partial)
= 
\left(
\begin{array}{cc} 
Q'+2Q\partial+p\partial^3 & 0 \\
\partial\circ v -\gamma \partial^2 & \epsilon \partial
\end{array}\right)
\,.
\end{equation}
The matrix $H_{(c,\gamma,\epsilon)}(\partial)$ is always non-degenerate, 
and its Dieudonn\`e determinant is
\begin{equation}\label{20140408:eq3}
\det(H_{(c,\gamma,\epsilon)})
=
\left\{\begin{array}{l}
p\xi^4\,\,\text{ for }\,\, p\neq0\,, \\
2Q\xi^2\,\,\text{ for }\,\, p=0\,.
\end{array}\right.
\end{equation}
\end{lemma}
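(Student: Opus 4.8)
The plan is to prove the matrix identity \eqref{20140404:eq8b} first by a direct computation, then read off non-degeneracy and the Dieudonné determinant from it. For the identity, I would simply multiply out the left-hand side: the bottom row of the left factor is $(0,1)$, so the bottom row of the product is just the bottom row of $H_{(c,\gamma,\epsilon)}$, namely $(\partial\circ v - \gamma\partial^2,\ \epsilon\partial)$, which matches. The content is in the top row. The $(1,2)$-entry of the product is $\epsilon(v\partial+\gamma\partial^2) + (-v-\gamma\partial)\epsilon\partial$, and since $\epsilon\in\mb F$ is a constant it commutes with $\partial$ and with $v$, so this is $\epsilon v\partial + \epsilon\gamma\partial^2 - \epsilon v\partial - \epsilon\gamma\partial^2 = 0$, as claimed. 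The $(1,1)$-entry is $\epsilon(u'+2u\partial+c\partial^3) + (-v-\gamma\partial)(\partial\circ v - \gamma\partial^2)$; here one expands $(-v-\gamma\partial)(\partial\circ v)$ and $(-v-\gamma\partial)(-\gamma\partial^2)$ as differential operators (remembering $\partial\circ v = v\partial + v'$ and $\gamma,\epsilon,c$ constant), collects powers of $\partial$, and checks that the result equals $Q'+2Q\partial + p\partial^3$ with $Q=\epsilon u - \tfrac12 v^2 - \gamma v'$ and $p = \epsilon c + \gamma^2$. This is the one genuinely mechanical step; I would present the coefficient of each $\partial^k$ ($k=0,1,2,3$) and verify it matches, noting in particular that the $\partial^2$ coefficient vanishes (which is why no $\partial^2$ term appears on the right — consistent with $Q'+2Q\partial+p\partial^3$ being the Hamiltonian-type operator it should be).

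Having established \eqref{20140404:eq8b}, I would deduce non-degeneracy and compute the Dieudonné determinant. The left-multiplying matrix $U=\begin{pmatrix}\epsilon & -v-\gamma\partial\\ 0 & 1\end{pmatrix}$ has Dieudonné determinant $\epsilon$ (it is upper triangular with diagonal entries $\epsilon$ and $1$), hence $\deg(U)=0$ and $U$ is invertible in $\Mat_{2\times 2}\mc K[\partial]$ \emph{provided $\epsilon\neq 0$}; in that case $\det(H) = \epsilon^{-1}\det(UH)$ and $\deg(H)=\deg(UH)$, and since $UH$ is already lower triangular, one brings it to upper triangular form (or uses the block structure) to read off $\det(UH)$ and $\deg(UH)$ from its diagonal entries $Q'+2Q\partial + p\partial^3$ and $\epsilon\partial$. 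The diagonal entry $\epsilon\partial$ contributes leading coefficient $\epsilon$ and degree $1$; the entry $Q'+2Q\partial+p\partial^3$ contributes, when $p\neq 0$, leading coefficient $p$ and degree $3$ (total degree $4$, total leading coefficient $\epsilon p$, so $\det(H)=\epsilon^{-1}\cdot \epsilon p\,\xi^4 = p\xi^4$), and when $p=0$ it is $Q'+2Q\partial = $ order-$1$ with leading coefficient $2Q$ (total degree $2$, leading coefficient $2\epsilon Q$, so $\det(H)=2Q\xi^2$). Note $Q\neq 0$ in $\mc V$ since it is a nonconstant differential polynomial, so in the $p=0$ case the diagonal entries are genuinely nonzero and $H$ is non-degenerate; when $p\neq 0$ this is immediate.

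The one wrinkle is the case $\epsilon = 0$, where $U$ is no longer invertible (its Dieudonné determinant vanishes). I would handle this separately: if $\epsilon = 0$ then $p=\gamma^2$ and $Q = -\tfrac12 v^2 - \gamma v'$, and the original matrix is $H = \begin{pmatrix} u'+2u\partial+c\partial^3 & v\partial+\gamma\partial^2\\ \partial\circ v - \gamma\partial^2 & 0\end{pmatrix}$; since the $(2,2)$-entry is zero, one can do a direct row reduction (swap rows, or eliminate using the $(2,1)$-entry $\partial\circ v - \gamma\partial^2 = (v+\gamma\partial)\partial$ which is non-degenerate when $v\neq 0$ or $\gamma\neq 0$). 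I expect this degenerate-$\epsilon$ case, together with keeping careful track of whether one is allowed to divide by $\epsilon$, to be the main (though still routine) obstacle; everything else is bookkeeping with the definition \eqref{20140408:eq2} of the Dieudonné determinant and the multiplicativity \eqref{20140408:eq2b}. In fact, since the paper's convention (Section \ref{sec:3}) is that $\mc V$ contains all expressions appearing in the text, one may also simply argue that the claimed formulas \eqref{20140408:eq3} are identities in the parameters and in $\mc V$, and verify them on the Zariski-dense locus $\epsilon\neq 0$ first, then extend by continuity of the Dieudonné determinant's degree under specialization — but the cleanest exposition is the direct case split above.
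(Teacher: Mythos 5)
Your proposal is correct and follows essentially the same route as the paper: direct verification of the identity \eqref{20140404:eq8b}, then for $\epsilon\neq 0$ the multiplicativity \eqref{20140408:eq2b} applied to the triangular factorization, and for $\epsilon=0$ a separate argument via row permutation (the paper likewise notes that $H$ becomes lower triangular after permuting rows in that case). Your extra care about the invertibility of the left factor and the nonvanishing of $Q$ is sound and only makes explicit what the paper leaves implicit.
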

\begin{proof}
Equation \eqref{20140404:eq8b} is straightforward to check.
For $\epsilon=0$, the matrix $H_{(c,\gamma,\epsilon)}$ becomes lower triangular after a
permutation of the rows, and in this case the determinant is easily computed.
For $\epsilon\neq0$, \eqref{20140408:eq3} follows immediately by \eqref{20140408:eq2b}.
\end{proof}

\begin{lemma}\label{20140404:lemb}
The kernel of the operator $H_{(c,\gamma,\epsilon)}(\partial)$ in \eqref{20140404:eq8}
is, depending on the values of the parameters $c,\gamma,\epsilon\in\mb F$, as follows:
\begin{enumerate}[(i)]
\item
If $p\neq0$, then
$$
\ker(H_{(c,\gamma,\epsilon)})
=
\mb F\delta v
\,.
$$
\item
If $\epsilon=0$, $\gamma=0$, 
then, in $\mc V\supset R_2[v^{-1}]$
with field of constants $\mb F$, we have
$$
\ker(H_{(c,\gamma,\epsilon)})
=
\mb F
\delta v
\oplus
\mb F
\delta \big(\frac{u}{v}-\frac{c}{2}\frac{(v')^2}{v^3}\big)
\,.
$$
\item
For $\epsilon\neq0$, $p=0$,
then, in $\mc V\supset R_2[Q^{-\frac12}]$
with field of constants $\mb F$, we have
$$
\ker(H_{(c,\gamma,\epsilon)})
=
\mb F\delta v
\oplus
\mb F
\delta (Q^{\frac12})
\,.
$$
\end{enumerate}
\end{lemma}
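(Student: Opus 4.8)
The plan is to compute $\ker(H_{(c,\gamma,\epsilon)})$ in each of the three cases by exploiting the triangularization in Lemma \ref{20140404:lema}. The key point is that the matrix
$$
T=\left(\begin{array}{cc} \epsilon & -v-\gamma\partial \\ 0 & 1 \end{array}\right)
$$
satisfies $TH_{(c,\gamma,\epsilon)} = \tilde H$, where $\tilde H$ is the lower-triangular-looking operator on the right side of \eqref{20140404:eq8b}. When $\epsilon\neq0$, $T$ is invertible over $\mc K$, so $\ker(H_{(c,\gamma,\epsilon)})=\ker(\tilde H)$; and in general, since $H$ is non-degenerate with the Dieudonn\'e determinant computed in \eqref{20140408:eq3}, I know $\dim_{\mc C}\ker(H_{(c,\gamma,\epsilon)})\le\deg(H)$, which is $4$ when $p\neq0$ and $2$ when $p=0$. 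So in every case it suffices to exhibit enough linearly independent elements of the kernel.

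First I would verify directly that $\delta v = \left(\begin{array}{l} 0 \\ 1 \end{array}\right)$ lies in $\ker(H_{(c,\gamma,\epsilon)})$ for all parameter values: this is immediate from \eqref{20140404:eq8}, since the second column of $H_{(c,\gamma,\epsilon)}$ applied to the constant $1$ gives $\left(\begin{array}{l} \gamma\partial^2(1) \\ \epsilon\partial(1)\end{array}\right)=0$ — wait, more carefully, $H_{(c,\gamma,\epsilon)}(\partial)\delta v$ means applying the operator to the column $(0,1)^t$, giving $\big(v\partial+\gamma\partial^2,\,\epsilon\partial\big)$ acting on $1$, which is $0$. For case (i), $p\neq0$: from \eqref{20140408:eq3}, $\deg(H)=4$, but I claim $\ker$ is only $1$-dimensional over $\mb F$. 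To see the kernel is not larger, I would argue that the Dieudonn\'e determinant being $p\xi^4$ with $p$ a nonzero constant does not by itself force $\dim\ker<4$; instead I must show directly that $H_{(c,\gamma,\epsilon)}F=0$ with $F=(f,g)^t$ forces $f,g$ constant. Using \eqref{20140404:eq8b}: $TF$ satisfies $\tilde HF'=0$ where $F'=(\epsilon f-(v+\gamma\partial)g,\,g)^t$; the second component gives $\epsilon g'=0$ so $g\in\mb F$, hence $v\partial g+\gamma\partial^2 g=0$, and then the first row of the original equation reads $(Q'+2Q\partial+p\partial^3)(\epsilon f - v g - \gamma g') = (Q'+2Q\partial+p\partial^3)(\epsilon f)=0$ after noting $g$ is constant. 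Now $L_0=Q'+2Q\partial+p\partial^3$ is a scalar operator with $p\neq0$, so its kernel over $\mc K$ has dimension $\le3$; but I need that within an algebra of differential functions the only solutions of $L_0\phi=0$ with $\phi\in\mc V$ are $\phi\in\mb F$. For this I would invoke that $L_0$ is (a constant multiple of) a Hamiltonian operator of Virasoro type whose Casimirs, i.e. $\delta$-images of local functionals annihilated by it, are trivial when $p\neq0$ — this is the content behind the fact that $\dim_{\mb F}\ker(L_0)$ inside an algebra of differential functions equals $1$; alternatively argue by degree/order bookkeeping that a nonconstant $\phi$ would have to satisfy $p\phi'''=-2Q\phi'-Q'\phi$, and inspection of leading derivative terms gives a contradiction unless $\phi$ is constant.

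For cases (ii) and (iii), the kernel should have dimension $2$ over $\mb F$ (matching $\deg(H)=2$ from \eqref{20140408:eq3}), so I must produce exactly one more element beyond $\delta v$. In case (ii), $\epsilon=\gamma=0$: I would check by direct computation that $\delta\big(\frac{u}{v}-\frac{c}{2}\frac{(v')^2}{v^3}\big)$ is annihilated by $H_{(c,0,0)}(\partial)$. The natural way to discover this is to solve $H_{(c,0,0)}(\partial)(f,g)^t=0$ directly: the equations become $(u'+2u\partial+c\partial^3)f + (\partial\circ v)^* $-type terms... concretely row one gives $(u'+2u\partial+c\partial^3)f + v\partial g=0$ and row two gives $\partial(vf)=0$, i.e. $vf\in\mb F$, so $f=\text{const}/v$; substituting and integrating determines $g$ up to a constant, yielding the stated generator. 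In case (iii), $\epsilon\neq0$ and $p=0$: here $T$ is invertible, so $\ker(H_{(c,\gamma,\epsilon)})=\ker\tilde H$, and since $p=0$ the top-left entry of $\tilde H$ is $Q'+2Q\partial=Q^{-1/2}\partial\circ(2Q^{3/2})\cdot\tfrac{?}{}$ — more precisely $Q'+2Q\partial = 2Q^{1/2}\partial\circ Q^{1/2}$ — hmm, let me just note $\ker(Q'+2Q\partial)$ over $R_2[Q^{-1/2}]$ is spanned by $Q^{-1/2}$. Then solving $\tilde H(F')=0$ for $F'=(\phi,g)$ gives $\epsilon g'=0$ so $g\in\mb F$ and $(Q'+2Q\partial)\phi=0$ so $\phi=cQ^{-1/2}$; translating back through $T^{-1}$ and checking that the resulting vector is $\delta$ of a local functional, I would verify the identity $\delta(Q^{1/2})$ gives the required kernel element by computing $\frac{\delta Q^{1/2}}{\delta u}$ and $\frac{\delta Q^{1/2}}{\delta v}$ explicitly from \eqref{20140408:eq4} and confirming $H_{(c,\gamma,\epsilon)}$ kills it.

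The main obstacle is the upper bound in case (i): I must rule out a $2$-, $3$-, or $4$-dimensional kernel even though the Dieudonn\'e degree is $4$. The gap between $\dim_{\mc C}\ker$ and $\deg$ is genuine here (the field of differential functions is not differentially closed), so the argument cannot be pure linear algebra over $\mc K$; it requires showing that the scalar Virasoro-type operator $L_0=Q'+2Q\partial+p\partial^3$ with $p\neq0$ has no nonconstant solutions $\phi$ of $L_0\phi=0$ lying in an algebra of differential functions — equivalently that $C(L_0)$, viewed as local functionals, contributes no new Casimir beyond those giving $\delta v$. I expect this to follow from an order/leading-term analysis: writing $\phi\in\mc V_{m,j}$ of minimal "order" and comparing the highest $u,v$-derivative appearing on both sides of $p\phi'''=-(2Q\phi'+Q'\phi)$ forces $\phi$ to be quasiconstant, hence in $\mb F$. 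A secondary, more bookkeeping-heavy obstacle is carrying out cleanly the back-substitution through $T^{-1}$ in case (iii) and verifying that the kernel vector one obtains is actually exact (a variational derivative), which is what legitimizes writing it as $\delta(Q^{1/2})$; this is routine but requires care with the $\gamma\partial$ term in $T$.
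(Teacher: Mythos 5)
Your overall strategy --- triangularize via Lemma \ref{20140404:lema}, reduce to the scalar operator $Q'+2Q\partial+p\partial^3$, bound the kernel by the Dieudonn\'e degree, and exhibit explicit kernel elements in cases (ii) and (iii) --- is the paper's strategy, and your treatments of (ii) and (iii) reproduce its computations. But case (i) has a genuine gap, compounded by a misreading of the factorization. Equation \eqref{20140404:eq8b} says $TH=\tilde H$, so $HF=0$ implies $\tilde H F=0$ for the \emph{original} vector $F=(f,g)^t$; this is the system \eqref{20140404:eq10}, whose rows are $Q'f+2Qf'+pf'''=0$ and $(vf)'-\gamma f''+\epsilon g'=0$. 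It does not imply $\tilde H(TF)=0$, and ``the second component gives $\epsilon g'=0$'' is the second row of neither system. For $\epsilon\neq 0$ this is repairable by doing the steps in the right order (first $f=0$ from the scalar equation, then $\epsilon g'=0$ from the second row). But the subcase $\epsilon=0$, $\gamma\neq0$ --- which has $p=\gamma^2\neq0$ and so belongs squarely to case (i) --- is left uncovered: there $T$ is not invertible, ``$\epsilon g'=0$'' says nothing, and after establishing $f=0$ one must still prove that $vg'+\gamma g''=0$ has only constant solutions $g$ in an algebra of differential functions. The paper devotes a separate opening paragraph to exactly this subcase (a differential-order argument followed by applying $\partial/\partial v$ to \eqref{20140404:eq12}); your proposal never produces that argument.

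A second, smaller defect in case (i): your order-bookkeeping conclusion ``$\phi$ is constant'' is not strong enough. A nonzero \emph{constant} $f$ would yield a kernel vector not proportional to $\delta v$, so you must conclude $f=0$. The paper does this by observing that for quasiconstant $f$ the scalar equation degenerates to $Q'f+pf'''=0$, and the coefficient of $vv'$ in $Q'$ is nonzero, forcing $f=0$. Your alternative route --- invoking triviality of the Casimirs of the Virasoro-type operator $Q'+2Q\partial+p\partial^3$ with $p\neq0$ --- does give $f=0$ and is acceptable, but it should be carried out rather than offered as a fallback. With the $\epsilon=0$, $\gamma\neq0$ subcase supplied and the $f=0$ step made explicit, the proof coincides with the paper's.
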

\begin{proof}
We consider first the case when $\epsilon=0$, $\gamma\neq0$.
In this case, $\Big(\begin{array}{l} f \\ g \end{array}\Big)$ lies in $\ker(H_{(c,\gamma,\epsilon)})$
implies
\begin{equation}\label{20140404:eq11}
f''=\frac1{\gamma}(vf)'\,.
\end{equation}
By a differential order consideration, it is immediate to see that, in this case, $f$ must be a quasiconstant,
and then, applying $\frac{\partial}{\partial v'}$ to both sides of \eqref{20140404:eq11},
we get that $f=0$.
Then, the equation
$H_{(c,\gamma,\epsilon)}(\partial)\Big(\begin{array}{l} 0 \\ g \end{array}\Big)=0$
gives
\begin{equation}\label{20140404:eq12}
g''=-\frac1{\gamma} vg'
\,.
\end{equation}
Again, by a differential order consideration we get that $g$ is quasiconstant,
and then, applying $\frac{\partial}{\partial v}$ to both sides of \eqref{20140404:eq12},
we conclude that $g'=0$.
Therefore, in this case, we have $f=0$, $g\in\mb F$,
which means that 
$\Big(\begin{array}{l} f \\ g \end{array}\Big)=$ const. $\delta v$.

Next, we consider the case when $\epsilon=\gamma=0$.
In this case, 
if $\Big(\begin{array}{l} f \\ g \end{array}\Big)$ lies in $\ker(H_{(c,\gamma,\epsilon)})$,
we get $(vf)^\prime=0$, which means
$f=\frac{\alpha}{v}$
for $\alpha\in\mb F$.
Then, the equation 
$H_{(c,\gamma,\epsilon)}(\partial)\Big(\begin{array}{l} \frac{\alpha}{v} \\ g \end{array}\Big)=0$
gives
$$
\begin{array}{l}
\displaystyle{
\vphantom{\Big(}
g'
=
-\alpha\frac{u'}{v^2}+2\alpha\frac{uv'}{v^3}-\alpha c\frac{1}{v}\Big(\frac{1}{v}\Big)'''
} \\
\displaystyle{
\vphantom{\Big(}
=
\alpha\Big(
-\frac{u}{v^2}
- c \frac{1}{v}\Big(\frac{1}{v}\Big)^{\prime\prime}
+\frac12 c \Big(\Big(\frac{1}{v}\Big)^\prime\Big)^2
\Big)^\prime
} \\
\displaystyle{
\vphantom{\Big(}
=
\alpha\Big(\frac{\delta}{\delta v}\Big(
\frac{u}{v}
- \frac{c}{2} \frac{(v')^2}{v^3}
\Big)\Big)^\prime
\,.}
\end{array}
$$
Hence, in this case, we have 
$$
\Big(\begin{array}{l} f \\ g \end{array}\Big)=
\alpha\delta\Big(
\frac{u}{v}
- \frac{c}{2} \frac{(v')^2}{v^3}
\Big)
+
\beta\delta v
\,,
$$
for some $\alpha,\beta\in\mb F$.
This proves part (ii).

For $\epsilon\neq0$, the matrix 
$\left(
\begin{array}{cc} 
\epsilon & -v-\gamma\partial \\
0 & 1
\end{array}\right)$
is invertible.
Hence, by equation \eqref{20140404:eq8b},
$\Big(\begin{array}{l} f \\ g \end{array}\Big)$ lies in $\ker(H_{(c,\gamma,\epsilon)})$
if and only if
\begin{equation}\label{20140404:eq10}
\left\{\begin{array}{l}
\displaystyle{
\vphantom{\Big(}
Q'f+2Qf'+pf'''=0
\,,} \\
\displaystyle{
\vphantom{\Big(}
(vf)^\prime-\gamma f''+\epsilon g'=0
\,.}
\end{array}\right.
\end{equation}
For $p\neq0$,
we can use a simple differential order consideration 
on the first equation of \eqref{20140404:eq10}
to prove that $f$ is quasiconstant,
and then, looking at the coefficient of $vv'$, 
we conclude that $f=0$.
Therefore, in this case, we have $f=0$, $g\in\mb F$.
This concludes the proof of part (i).

Finally, for $p=0$ the first equation in \eqref{20140404:eq10} becomes
\begin{equation}\label{20140404:eq15}
2 Q f' + Q'f = 0
\,,
\end{equation}
which has solution 
\begin{equation}\label{20140404:eq16}
f
=
\frac12\alpha\epsilon\, Q^{-\frac12}
=
\alpha\frac{\delta}{\delta u}Q^{\frac12}
\,,
\end{equation}
for $\alpha\in\mb F$.
Substituting in the second equation of \eqref{20140404:eq10}, we get
\begin{equation}\label{20140404:eq17}
g
=
\frac12\gamma\alpha (Q^{-\frac12})'
-\frac12\alpha v Q^{-\frac12}
+
\beta
=
\alpha\frac{\delta}{\delta v}Q^{\frac12}+\beta\frac{\delta}{\delta v}v
\,,
\end{equation}
for $\beta\in\mb F$.
In conclusion, 
$\Big(\begin{array}{l} f \\ g \end{array}\Big)=\alpha\delta (Q^{\frac12})+\beta\delta v$,
proving (iii).
\end{proof}
\begin{corollary}\label{20140408:cor}
The operator $H_{(c,\gamma,\epsilon)}$ in \eqref{20140404:eq8}
is strongly skew-adjoint over $\mc V$
if and only if $p=0$.
\end{corollary}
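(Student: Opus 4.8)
The plan is to recall that strong skew-adjointness of $H_{(c,\gamma,\epsilon)}$ over $\mc V$ means two things: $\ker H\subset\delta(\mc V/\partial\mc V)$ and $(\ker H)^\perp=\im H$. The first condition is already handled by Lemma \ref{20140404:lemb}: in every case the computed kernel is spanned by variational derivatives, so $\ker H\subset\delta(\mc V/\partial\mc V)$ holds unconditionally. Therefore the content of the corollary reduces to deciding when $(\ker H)^\perp=\im H$ holds, and, since $H$ is skew-adjoint, by Corollary \ref{20140407:prop1} (applied over the fraction field $\mc K=\mathrm{Frac}(\mc V)$, which satisfies Lemma \ref{20140408:lem}) this is equivalent to the numerical condition $\dim_{\mc C}(\ker H)=\deg(H)$.

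Next I would simply compare the two quantities computed earlier. By Lemma \ref{20140404:lema}, $\deg(H_{(c,\gamma,\epsilon)})=4$ when $p\neq0$ and $\deg(H_{(c,\gamma,\epsilon)})=2$ when $p=0$. By Lemma \ref{20140404:lemb}, $\dim_{\mc C}(\ker H_{(c,\gamma,\epsilon)})=1$ when $p\neq0$ (case (i)), and $\dim_{\mc C}(\ker H_{(c,\gamma,\epsilon)})=2$ when $p=0$ (cases (ii) and (iii), covering $\epsilon=\gamma=0$ and $\epsilon\neq0,\,p=0$ respectively; note $p=0$ with $\gamma\neq0$ forces $\epsilon\neq0$, so these two cases exhaust $p=0$). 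Hence $\dim_{\mc C}(\ker H)=\deg(H)$ if and only if $p=0$, which by Corollary \ref{20140407:prop1} is equivalent to $(\ker H)^\perp=\im H$. Combined with the first paragraph, this gives: $H_{(c,\gamma,\epsilon)}$ is strongly skew-adjoint over $\mc V$ if and only if $p=0$.

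One point to be careful about is that Lemma \ref{20140404:lemb} computes the kernel in a possibly enlarged algebra $\mc V\supset R_2[v^{-1}]$ or $\mc V\supset R_2[Q^{-1/2}]$, and the running convention in Section \ref{sec:3} is precisely that $\mc V$ is taken large enough to contain all such expressions; so the dimension count is the correct one over our $\mc V$, and its fraction field shares the same field of constants $\mb F=\mc C$. Another small point: Corollary \ref{20140407:prop1} is stated over a differential field, so I invoke it for $\mc K=\mathrm{Frac}(\mc V)$; since $\ker H$ and $\im H$ behave well under this localization (kernels of differential operators and the numerical invariants $\deg H$, $\dim_{\mc C}\ker H$ are unchanged), and the orthogonality statement $(\ker H)^\perp=\im H$ over $\mc V$ follows from the one over $\mc K$ together with $\ker H\subset\mc F^\ell\subset\mc V^\ell$, this transfer is routine. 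I do not expect any genuine obstacle here: the corollary is essentially a bookkeeping synthesis of Lemmas \ref{20140404:lema} and \ref{20140404:lemb} via the equivalence in Corollary \ref{20140407:prop1}, with the only mild subtlety being the passage between $\mc V$ and its fraction field and the ambient-algebra conventions.
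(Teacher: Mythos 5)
Your reduction to the numerical criterion of Corollary \ref{20140407:prop1} is correct as far as it goes, and the bookkeeping ($\deg H=4$ vs.\ $\dim_{\mc C}\ker H=1$ for $p\neq0$; $\deg H=2$ vs.\ $\dim_{\mc C}\ker H=2$ for $p=0$, with the two subcases of Lemma \ref{20140404:lemb} exhausting $p=0$) is right. But what this proves is strong skew-adjointness over the differential \emph{field} $\mc K=\mathrm{Frac}(\mc V)$ --- which is exactly the content of Remark \ref{20140505:rem} in the paper, stated there as a separate, weaker observation. The corollary asserts strong skew-adjointness over the \emph{algebra} $\mc V$, and the step you call ``routine'' is the actual content of the proof. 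Concretely: given $P\in\mc V^2$ with $P\perp\ker H$, Corollary \ref{20140407:prop1} over $\mc K$ only produces $F\in\mc K^2$ with $HF=P$; since $\ker H\subset\mc V^2$ is finite-dimensional over $\mb F$, if that $F$ does not lie in $\mc V^2$ then no preimage does, and nothing in the field-level argument forces $F\in\mc V^2$. The identity $\im_{\mc K}H\cap\mc V^\ell=\im_{\mc V}H$ is false for general differential operators, and whether it holds here depends on exactly which inverses ($v^{-1}$, $Q^{-1/2}$) have been adjoined to $\mc V$.

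This is why the paper's proof, instead of localizing, works directly in $\mc V$: for $p=0$, $\epsilon\neq0$ it parametrizes $(\ker H)^\perp\cap\mc V^2$ explicitly and exhibits a preimage in $\mc V^2$ for each such element, using the factorization $Q'+2Q\partial=2Q^{1/2}\partial\circ Q^{1/2}$ (which is available precisely because $Q^{\pm1/2}\in\mc V$); for $\epsilon=\gamma=0$ it invokes \cite[Prop.5.1(c)]{DSKT13}; and for $p\neq0$ it exhibits a concrete element $(f,0)^T$ of $(\ker H)^\perp$ not in $\im H$, via non-surjectivity of $Q'+2Q\partial+p\partial^3$ on $\mc V$. (A similar, milder remark applies to your ``only if'' direction: failure of condition (b) over $\mc K$ hands you a witness in $\mc K^2$ with $\mc K$-orthogonality, and you still need to manufacture one in $\mc V^2$ that is orthogonal modulo $\partial\mc V$.) To repair your argument you would need to either supply the explicit preimages in $\mc V^2$ as the paper does, or prove a lemma to the effect that for these particular operators $\im_{\mc K}H\cap(\ker H)^{\perp_{\mc V}}=\im_{\mc V}H$; neither is automatic.
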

\begin{proof}
%
%
First, for $p\neq0$ we have, by Lemma \ref{20140404:lemb}(i),
$\ker H_{(c,\gamma,\epsilon)}=\mb F\left(\begin{array}{l} 0\\1 \end{array}\right)$.
Hence, the orthogonal complement to $\ker H_{(c,\gamma,\epsilon)}$ consists of all vectors
of the form $\left(\begin{array}{l} f\\g' \end{array}\right)$,
with $f,g\in\mc V$.
We need to prove that not all these elements lie in the image of $H_{(c,\gamma,\epsilon)}$.
Suppose, by contradiction, that
$\left(\begin{array}{l} f\\0 \end{array}\right)\in\im H_{(c,\gamma,\epsilon)}$
for every $f\in\mc V$.
Then, by Lemma \ref{20140404:lemb},
we have $f\in\im(Q'+2Q\partial+p\partial^3)$,
which is impossible, since the operator $Q'+2Q\partial+p\partial^3$ is not surjective on $\mc V$.

In the case $\epsilon=\gamma=0$, 
$H_{(c,\gamma,\epsilon)}$ is strongly skewadjoint due to \cite[Prop.5.1(c)]{DSKT13}.
Finally, consider the case $p=0,\epsilon\neq0$.
By Lemma \ref{20140404:lemb}(iii), the kernel of $H_{(c,\gamma,\epsilon)}$ 
is spanned over $\mb F$ by
$\left(\begin{array}{l} 0\\1 \end{array}\right)$
and 
$\left(\begin{array}{c} \frac{\epsilon}{2} Q^{-\frac12} \\ 
-\frac12 vQ^{-\frac12}+\frac{\gamma}{2}\partial Q^{-\frac12} \end{array}\right)$.
Hence, the orthogonal complement to $\ker H_{(c,\gamma,\epsilon)}$ consists of
elements of the form
$\left(\begin{array}{l} f\\g' \end{array}\right)$,
where $f,g\in\mc V$ are such that
$$
\frac12\int Q^{-\frac12}
\Big(
\epsilon f-vg'-\gamma g''
\Big)=0
\,.
$$
In other words (by construction, $Q^{\frac12}$ is an invertible element of $\mc V$),
$$
\epsilon f
=
vg'+\gamma g''
+2 Q^{\frac12}\partial(Q^{\frac12} h)
\,,
$$
for some $h\in\mc V$.
Hence, we need to prove that, for every $g,h\in\mc V$, we have
$$
\left(\begin{array}{l} f\\g' \end{array}\right)
=
\left(\begin{array}{l} 
\frac1\epsilon\Big(
vg'+\gamma g''+2 Q^{\frac12}\partial(Q^{\frac12} h)
\Big) \\
g'
\end{array}\right)
\,\,\in
\im H_{(c,\gamma,\epsilon)}
\,.
$$
By Lemma \ref{20140404:lema}, this is the same as proving that
$$
\left(
\begin{array}{cc} 
\epsilon & -v-\gamma\partial \\
0 & 1
\end{array}\right)
\left(\begin{array}{l} f\\g' \end{array}\right)
=
\left(\begin{array}{l} 
2 Q^{\frac12}\partial(Q^{\frac12} h) \\
g'
\end{array}\right)
\,\,\in
\im
\left(
\begin{array}{cc} 
2 Q^{\frac12}\partial\circ Q^{\frac12} & 0 \\
\partial\circ v -\gamma \partial^2 & \epsilon \partial
\end{array}\right)
\,,
$$
which is easily checked.

Also, $\ker H_{(c,\gamma,\epsilon)}\subset\delta(\mc V)$
by Lemma \ref{20140404:lemb}.
Hence $H_{(c,\gamma,\epsilon)}$ is strongly skew-adjoint.
\end{proof}
\begin{remark}\label{20140505:rem}
The fact that $H_{(c,\gamma,\epsilon)}$ is strongly skew-adjoint over 
the differential field $\mc K=Frac(\mc V)$
when $p=0$ is immediate by Corollary \ref{20140407:prop1} and Lemma \ref{20140404:lemb}.
\end{remark}

\subsection{The case $a=0$}

When $a=0$ in \eqref{20140404:eq7}, we have the following $5$-parameter family of 
Poisson structures with constant coefficients:
\begin{equation}\label{20140404:eq9}
H^{(c,\alpha,\beta,\gamma,\epsilon)}(\partial)
= 
\left(
\begin{array}{cc} 
\alpha \partial + c\partial^3 & \beta \partial + \gamma \partial^2 \\
\beta \partial -\gamma \partial^2 & \epsilon \partial
\end{array}\right)
\,.
\end{equation}
\begin{corollary}\label{20140408:cor2}
The operator $H^{(c,\alpha,\beta,\gamma,\epsilon)}(\partial)$ in \eqref{20140404:eq9}
is non-degenerate if and only if either $p\neq0$ or $q:=\alpha\epsilon-\beta^2\neq0$,
and in this case it is strongly skew-adjoint
(for some differential field $\mc F$ of quasiconstants
with the field of constants $\mc C$).
\end{corollary}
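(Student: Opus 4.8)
The plan is to analyze the constant-coefficient operator $H=H^{(c,\alpha,\beta,\gamma,\epsilon)}$ directly via its Dieudonn\'e determinant, and then invoke Proposition \ref{20140407:prop} (the general result that any non-degenerate skew-adjoint operator with quasiconstant coefficients is strongly skew-adjoint over a suitable extension algebra of differential functions). First I would compute $\det(H)$. Since all entries are constant-coefficient, one can left-multiply by the matrix $\left(\begin{array}{cc} \epsilon & -\beta-\gamma\partial \\ 0 & 1\end{array}\right)$ (the constant-coefficient analogue of Lemma \ref{20140404:lema}, valid when $\epsilon\neq0$) to bring $H$ to upper-triangular form with diagonal entries $p\partial^3+q\partial$ and $\epsilon\partial$, whence $\det(H)=p\epsilon\,\xi^{4}$ if $p\neq0$, and $\det(H)=q\epsilon\,\xi^{2}$ if $p=0$, $\epsilon\ne 0$; the case $\epsilon=0$ is handled separately by a row permutation making $H$ triangular, and there one reads off that $H$ is non-degenerate iff the off-diagonal product is nonzero, i.e. iff $\beta^2-\gamma^2\partial^2$-type considerations give $p=\gamma^2\ne0$ or $q=-\beta^2\ne 0$. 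Assembling these cases shows $H$ is non-degenerate precisely when $p\neq0$ or $q\neq0$, which is the first assertion.

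Next, for the strong skew-adjointness, I would simply observe that $H$ is skew-adjoint (immediate from the form \eqref{20140404:eq9}, since $\alpha\partial+c\partial^3$ and $\epsilon\partial$ are skew-adjoint and the off-diagonal blocks $\beta\partial+\gamma\partial^2$ and $\beta\partial-\gamma\partial^2$ are negative adjoints of each other) and has constant — hence quasiconstant — coefficients. Therefore Proposition \ref{20140407:prop} applies verbatim: there is a differential field extension $\widetilde{\mc F}$ of $\mc F=\mb F$ with the same subfield of constants over which $\dim_{\mc C}\ker H=\deg(H)$, and over the corresponding $\widetilde{\mc V}$ the operator $H$ is strongly skew-adjoint. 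This is exactly the content of the parenthetical ``for some differential field $\mc F$ of quasiconstants with the field of constants $\mc C$'' in the statement.

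The only real work, then, is the determinant/non-degeneracy bookkeeping across the sub-cases $\{\epsilon\neq0,\ p\neq0\}$, $\{\epsilon\neq0,\ p=0\}$, $\{\epsilon=0\}$; everything about strong skew-adjointness is subsumed by Proposition \ref{20140407:prop}. I expect the main (very minor) obstacle to be the degenerate bookkeeping in the case $\epsilon=0$, where one must check carefully that $H$ fails to be non-degenerate exactly when both $p=0$ and $q=0$ — equivalently, using $p=\epsilon c+\gamma^2=\gamma^2$ and $q=\alpha\epsilon-\beta^2=-\beta^2$ when $\epsilon=0$, when $\gamma=\beta=0$, in which case the first row becomes $(\alpha\partial+c\partial^3,\ 0)$ and the operator is clearly a right zero divisor. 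Once this case is dispatched, the equivalence ``non-degenerate $\iff$ $p\ne0$ or $q\ne0$'' follows, and the strong skew-adjointness is automatic, completing the proof.

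\begin{proof}
If $\epsilon\neq0$, left-multiplying $H^{(c,\alpha,\beta,\gamma,\epsilon)}$ by the invertible matrix
$\left(\begin{array}{cc} \epsilon & -\beta-\gamma\partial \\ 0 & 1\end{array}\right)$
yields the upper-triangular operator with diagonal entries $p\partial^3+q\partial$ and $\epsilon\partial$, where $p=\epsilon c+\gamma^2$ and $q=\alpha\epsilon-\beta^2$. Hence, by \eqref{20140408:eq2b}, $H^{(c,\alpha,\beta,\gamma,\epsilon)}$ is non-degenerate if and only if $p\partial^3+q\partial\neq0$, i.e. iff $p\neq0$ or $q\neq0$. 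If $\epsilon=0$, a permutation of the rows brings $H^{(c,\alpha,\beta,\gamma,\epsilon)}$ to lower-triangular form with diagonal entries $\beta\partial-\gamma\partial^2$ and $\beta\partial+\gamma\partial^2$, so it is non-degenerate iff these are both nonzero, i.e. iff $\beta\neq0$ or $\gamma\neq0$; and when $\epsilon=0$ we have $p=\gamma^2$ and $q=-\beta^2$, so this is again equivalent to $p\neq0$ or $q\neq0$. This proves the first assertion.

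Finally, $H^{(c,\alpha,\beta,\gamma,\epsilon)}$ is skew-adjoint, as is immediate from the form \eqref{20140404:eq9}, and it has constant, hence quasiconstant, coefficients. By Proposition \ref{20140407:prop}, any such non-degenerate operator is strongly skew-adjoint over an algebra of differential functions $\widetilde{\mc V}=\widetilde{\mc F}\otimes_{\mc F}\mc V$ for a suitable differential field extension $\widetilde{\mc F}$ of $\mc F$ with the same subfield of constants $\mc C$. This completes the proof.
\end{proof}
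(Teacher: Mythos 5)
Your proof is correct and follows the same route as the paper, which simply cites Proposition \ref{20140407:prop} for the strong skew-adjointness; you additionally spell out the (omitted but routine) Dieudonn\'e determinant computation showing non-degeneracy is equivalent to $p\neq0$ or $q\neq0$, and that computation checks out (only a trivial slip: the left-multiplied matrix is lower-, not upper-, triangular, which changes nothing).
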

\begin{proof}
It follows from Proposition \ref{20140407:prop}.
\end{proof}
As we did in the previous section,
we compute the kernel of the matrix differential operator  $H^{(c,\alpha,\beta,\gamma,\epsilon)}$.
\begin{lemma}\label{20140410:lem}
Let, as before, $p=c\epsilon+\gamma^2$ and $q=\alpha\epsilon-\beta^2$.
The kernel of the operator $H^{(c,\alpha,\beta,\gamma,\epsilon)}(\partial)$ in \eqref{20140404:eq9}
is, depending on the values of the parameters $c,\alpha,\beta,\gamma,\epsilon\in\mb F$, as follows.
If $p=0$, $q\neq0$, then 
$$
\ker H^{(c,\alpha,\beta,\gamma,\epsilon)}(\partial)
=
\Span{}_{\mb F}
\{\delta v,\delta u\}
\,.
$$
If $p\neq0$, then $\ker H^{(c,\alpha,\beta,\gamma,\epsilon)}(\partial)$ is $4$-dimensional 
over $\mc C$
with basis $\delta f_i$, $i=1,2,3,4$, where $f_1=v$, $f_2=u$,
and the elements $f_3$ and $f_4$ as follows:
\begin{enumerate}[(i)]
\item
if $q\epsilon\neq0$, then,
taking $\mc C=\mb F[\sqrt{\frac{q}{p}}]$
and $\mc F=\mc C(\cos\sqrt{\frac{q}{p}}x,\sin\sqrt{\frac{q}{p}}x)$,
$$
\begin{array}{l}
\displaystyle{
f_3
=
\Big(
\cos{\sqrt{\frac{q}{p}}x}
\Big)
u
-\Big(
\frac{\beta}{\epsilon}
\cos{\sqrt{\frac{q}{p}}x}
+\frac{\gamma}{\epsilon}
\sqrt{\frac{q}{p}}
\sin{\sqrt{\frac{q}{p}}x}
\Big)v
\,,} \\
\displaystyle{
f_4
=
\Big(
\sin{\sqrt{\frac{q}{p}}x}
\Big)u 
+\Big(
\frac{\gamma}{\epsilon}
\sqrt{\frac{q}{p}}
\cos{\sqrt{\frac{q}{p}}x}
-\frac{\beta}{\epsilon}
\sin{\sqrt{\frac{q}{p}}x}
\Big)v
\Bigg\}
\,;} \\
\end{array}
$$
\item
if $\epsilon=0,q\neq0$, then,
taking $\mc C=\mb F$
and $\mc F=\mb F(e^{\pm\frac\beta\gamma x})$,
$$
f_3
=
2
e^{\frac{\beta}{\gamma}x}u
+\Big(
\frac{\alpha}{\beta}
+
\frac{c\beta}{\gamma^2}
\Big)
e^{\frac{\beta}{\gamma}x }v
\,\,,\,\,\,\,
f_4
=
e^{-\frac{\beta}{\gamma}x} v
\,;
$$
\item
if $\epsilon\neq0$, $q=0$, then,
taking $\mc C=\mb F$
and $\mc F=\mb F(x)$,
$$
f_3
=
-x^2 u+\Big(\frac{\beta}{\epsilon}x^2-2\frac{\gamma}{\epsilon}x\Big)v
\,\,,\,\,\,\,
f_4
=
-xu+\frac{\beta}{\epsilon}xv
\,;
$$
\item
if $\epsilon=q=0$, then,
taking $\mc C=\mb F$
and $\mc F=\mb F(x)$,
$$
f_3
=
-xu+\frac{\alpha}{2\gamma}x^2v
\,\,,\,\,\,\,
f_4
=
xv
\,.
$$
\end{enumerate}
\end{lemma}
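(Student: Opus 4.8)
The plan is to factor an overall derivative out of $H:=H^{(c,\alpha,\beta,\gamma,\epsilon)}$ in \eqref{20140404:eq9} on the right, thereby reducing the problem to a constant-coefficient linear system followed by a single antidifferentiation. Since every entry of $H$ is right-divisible by $\partial$, we have
\begin{equation*}
H(\partial)=M(\partial)\circ\left(\begin{array}{cc}\partial&0\\0&\partial\end{array}\right)
\,,\qquad
M(\partial)=\left(\begin{array}{cc}\alpha+c\partial^2&\beta+\gamma\partial\\\beta-\gamma\partial&\epsilon\end{array}\right)\in\Mat_{2\times2}\mb F[\partial]
\,,
\end{equation*}
so that, writing $F'$ for the entrywise derivative of a column $F$, we get $F\in\ker H$ if and only if $F'\in\ker M$. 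As $M$ has constant coefficients, $\mb F[\partial]$ is commutative and $\det M$ coincides with the ordinary determinant $\epsilon(\alpha+c\partial^2)-(\beta+\gamma\partial)(\beta-\gamma\partial)=p\partial^2+q$; hence $\deg M=2$ if $p\neq0$ and $\deg M=0$ if $p=0$ (and then $q\neq0$ by Corollary~\ref{20140408:cor2}), while $\deg H=\deg M+2$ by \eqref{20140408:eq2b}. Recall also that $\dim_{\mc C}\ker H\leq\deg H$ and $\dim_{\mc C}\ker M\leq\deg M$.

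First I would compute $\ker M$ by solving $MG=0$ in each regime. If $\epsilon\neq0$, the second row gives $G_2=\frac1\epsilon(\gamma G_1'-\beta G_1)$, and substituting into the first row leaves the scalar equation $pG_1''+qG_1=0$: this forces $G=0$ when $p=0$, while for $p\neq0$ one gets $G_1\in\Span{}_{\mc C}\{1,x\}$ if $q=0$ (case (iii)) and $G_1\in\Span{}_{\mc C}\{\cos\mu x,\sin\mu x\}$, $\mu=\sqrt{q/p}$, if $q\neq0$ (case (i)). If $\epsilon=0$ and $\gamma\neq0$, the second row reads $\gamma G_1'=\beta G_1$, giving $G_1\in\mc C e^{(\beta/\gamma)x}$ when $\beta\neq0$ (case (ii)) or $G_1\in\mc C$ when $\beta=0$ (case (iv)), and the first row then determines $G_2$ up to the solution of $(\beta+\gamma\partial)G_2=0$. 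Finally, if $\epsilon=\gamma=0$, then $p=0$, $q=-\beta^2\neq0$, and both rows force $G=0$. In every regime this exhibits $\ker M$ explicitly as the span over $\mc C$ of two vectors whose entries lie in the extension of $\mb F$ by the functions listed in (i)--(iv) (and $\ker M=0$ when $p=0$); one checks that the subfield of constants of that extension is the $\mc C$ named in the statement.

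Next I would integrate once. The constant vectors $\delta u=(1,0)^T$ and $\delta v=(0,1)^T$ always lie in $\ker H$, and the map $F\mapsto F'$ carries $\ker H$ onto $\ker M$ with kernel $\Span{}_{\mc C}\{\delta u,\delta v\}$; surjectivity holds because each of the finitely many explicit generators of $\ker M$ (a vector of polynomials in $x$, of exponentials, or of trigonometric functions) has an antiderivative in the same extension. Hence $\dim_{\mc C}\ker H=2+\dim_{\mc C}\ker M$, which is $2$ when $p=0$ and $4$ when $p\neq0$, consistently with $\deg H$. For $p=0$ this already gives $\ker H=\Span{}_{\mb F}\{\delta v,\delta u\}$. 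When $p\neq0$, in each of the cases (i)--(iv) one antidifferentiates the two basis vectors of $\ker M$, rescales them, and adds suitable $\mc C$-multiples of $\delta u$ and $\delta v$, so as to recognize the outcome as $\delta f_3$ and $\delta f_4$ for the stated $f_3,f_4$ (verifying $\frac{\delta f_i}{\delta u}$ and $\frac{\delta f_i}{\delta v}$ directly); together with $\delta f_1=\delta v$ and $\delta f_2=\delta u$ this exhibits the asserted basis.

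The argument is mostly bookkeeping across the various parameter regimes; the single genuine idea is the factorization $H(\partial)=M(\partial)\circ\mathrm{diag}(\partial,\partial)$, which is available precisely because (in contrast to the $a\neq0$ family) every entry of $H$ is right-divisible by $\partial$, and which turns the computation into a constant-coefficient linear system plus one antidifferentiation. I expect the step most prone to slips---though entirely routine---to be matching the antiderivatives of the $\ker M$-generators, after rescaling and adjustment by $\delta u,\delta v$, with the explicit $f_3,f_4$ in (i)--(iv), in particular keeping the signs and the coefficients of type $\frac{\gamma}{\epsilon}\sqrt{q/p}$ straight.
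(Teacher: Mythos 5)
Your argument is correct, and it supplies the computation that the paper's proof simply records as ``Straightforward'': the factorization $H=M(\partial)\circ\mathrm{diag}(\partial,\partial)$, with $M$ constant-coefficient of ordinary determinant $p\partial^2+q$, cleanly reduces the problem to the linear system $MG=0$ followed by one antidifferentiation, and your case analysis covers exactly the four regimes of the statement with the right dimensions ($2+\dim_{\mc C}\ker M$, consistent with $\deg H$). One substantive remark, which your method actually surfaces at the final matching step you yourself flag as slip-prone: in case (ii) the second row forces $\tfrac{\delta f_3}{\delta u}=2e^{kx}$ with $k=\beta/\gamma$, and the first row then reads $(\beta+\gamma\partial)\tfrac{\delta f_3}{\delta v}=-(\alpha+c\partial^2)\big(2e^{kx}\big)=-2(\alpha+ck^2)e^{kx}$; since $(\beta+\gamma\partial)e^{kx}=2\beta e^{kx}$, this forces (modulo the homogeneous solution $e^{-kx}$, which accounts for $f_4$) $\tfrac{\delta f_3}{\delta v}=-\big(\tfrac{\alpha}{\beta}+\tfrac{c\beta}{\gamma^2}\big)e^{kx}$, i.e.\ the coefficient of $v$ in $f_3$ should carry a minus sign. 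With the sign as printed in the Lemma, the first component of $H(\partial)\delta f_3$ equals $4(\alpha k+ck^3)e^{kx}$, which is nonzero in general; so this is a typo in the statement rather than a flaw in your argument. Cases (i), (iii), (iv) and the $p=0$ case check out exactly as you describe.
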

\begin{proof}
Straightforward.
\end{proof}

\section{Bi-Hamiltonian hierarchies with $H_0$ of type \eqref{20140404:eq8}}
\label{sec:4}

In this and the next section
we discuss the applicability of the Lenard-Magri scheme of integrability
for a bi-Poisson structure $(H_0,H_1)$
within the family \eqref{20140404:eq7}.
In order to use Theorem \ref{20140403:thm}
we shall assume that $H_0$ is strongly skew-adjoint.
This leads to the following two cases (cf. Corollaries \ref{20140408:cor} and \ref{20140408:cor2}),
where, as before,
$p=c\epsilon+\gamma^2$
and
$q=\alpha\epsilon-\beta^2$:
\begin{enumerate}[A)]
\item
the case when
$H_0=H_{(c,\gamma,\epsilon)}$, for $c,\gamma,\epsilon\in\mb F$
such that $p=0$,
which will be discussed in the present section;
\item
the case when
$H_0=H^{(c,\alpha,\beta,\gamma,\epsilon)}$, for $c,\alpha,\beta,\gamma,\epsilon\in\mb F$
with either $p\neq0$, or $p=0$ and $q\neq0$,
which will be discussed in Section \ref{sec:5}.
\end{enumerate}

Recall that the Lenard-Magri scheme associated to a bi-Poisson structure $(H_0,H_1)$
only depends on the flag $\mb FH_0\subset\mb FH_0+\mb FH_1$,
see e.g. \cite{BDSK09}.
Hence, in case (a), adding a constant multiple of $H_0$ to $H_1$,
we may assume that $H_1$ is of type \eqref{20140404:eq9}.
Therefore, we consider the following bi-Poisson structures
$$
H_0=H_{(c,\gamma,\epsilon)}
\,\,,\,\,\,\,
H_1=H^{(c_1,\alpha_1,\beta_1,\gamma_1,\epsilon_1)}
\,,
$$
where $c,\gamma,\epsilon,c_1,\alpha_1,\beta_1,\gamma_1,\epsilon_1\in\mb F$
and $p=0$.
In this section
\begin{equation}\label{20140416:eq3a}
\mc V=\mb F[u,v^{\pm1},u',v',u'',v'',\dots]
\,\,\text{ if }\,\, 
\epsilon=0
\,,
\end{equation}
and
\begin{equation}\label{20140416:eq3b}
\mc V=\mb F[u,v,u',v',u'',v'',\dots,Q^{-\frac12}]
\text{ if } \epsilon\neq0
\,.
\end{equation}

Let $\tint h_0=\tint v$,
and let $\tint h_1\in C(H_0)$ be equal to 
$\tint \big(\frac{u}{v}-\frac{c}{2}\frac{(v')^2}{v^3}\big)$ or $\tint Q^{\frac12}$,
depending on whether $\epsilon=0$ or $\epsilon\neq0$.
By Lemma \ref{20140404:lemb}, $\{\tint h_0,\tint h_1\}$
is a basis of $C(H_0)$.
Moreover, $\tint h_0=\tint v\in C(H_1)$.
Therefore, by Corollary \ref{20140409:cor},
there exists an infinite sequence $\{\tint h_n\}_{n\in\mb Z_+}$
satisfying the Lenard-Magri recursive conditions \eqref{20140404:eq3}
for each $n\in\mb Z_+$.
We thus get an integrable hierarchy of bi-Hamiltonian equations
\begin{equation}\label{20140409:eq3}
\frac{d{\bf u}}{dt_n}=H_0(\partial)\delta h_n
\,,
\end{equation}
provided that the integrals of motion $\tint h_n$ are linearly independent,
which follows from
\begin{proposition}\label{20140416:prop}
If $H_1\neq0$, then the elements $\delta h_n=\left(\begin{array}{l}f_n \\ g_n\end{array}\right)\in\mc V^2$, 
$n\in\mb Z_+$, are linearly independent.
\end{proposition}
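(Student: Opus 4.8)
The plan is to argue by contradiction: suppose only finitely many of the $\delta h_n$ are linearly independent over $\mb F$, so that there is some $N$ with $\delta h_{N+1}\in\Span_{\mb F}\{\delta h_0,\dots,\delta h_N\}$. Writing $\delta h_{N+1}=\sum_{k\le N}c_k\,\delta h_k$ with $c_k\in\mb F$, I would feed this into the Lenard--Magri relation $H_1(\partial)\delta h_N=H_0(\partial)\delta h_{N+1}$ and iterate downward using \eqref{20140404:eq3} to produce a nontrivial $\mb F$-linear relation among the $H_0\delta h_k$, and hence, peeling off one more step, a relation forcing some $\delta h_m$ with $m<N$ to lie in $\ker H_0$ modulo lower terms. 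Since $\ker H_0=\delta C(H_0)=\mb F\delta h_0\oplus\mb F\delta h_1$ is only two-dimensional by Lemma \ref{20140404:lemb}, this caps the whole chain: essentially all the information in the hierarchy must already be visible in $\delta h_0,\delta h_1$ together with the recursion, and one gets a contradiction unless $H_1$ acts in a degenerate way.

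More concretely, the key step is a \emph{degree/order growth} argument. I would track a suitable notion of differential order (or polynomial degree in the $u_i^{(n)}$, using the grading on $\mc V$) of the components $f_n,g_n$ and show that, because $H_0$ of type \eqref{20140404:eq8} has a nontrivial kernel of dimension exactly $2$ while $H_1\ne 0$ contributes genuinely new terms at each step, the order of $\delta h_n$ is strictly increasing in $n$ for $n$ large (after discarding the ambiguity coming from adding elements of $C(H_0)$, i.e. after normalizing each $\tint h_n$ modulo $\mb F\delta h_0+\mb F\delta h_1$). Strict monotonicity of the order immediately yields linear independence. The technical heart is the claim that the recursion $H_0\delta h_{n+1}=H_1\delta h_n$ cannot stabilize: if $\delta h_{n+1}$ had the same order as $\delta h_n$ for all large $n$, then comparing leading symbols of $H_0$ and $H_1$ (using that $\det H_0=2Q\xi^2$ when $p=0$, and that $H_1$ has constant coefficients of type \eqref{20140404:eq9}) would force $H_1$ to be proportional to $H_0$ modulo an invertible left factor, contradicting $H_1\ne 0$ together with $p=0$ and the structure of the family.

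The main obstacle I anticipate is handling the \emph{two-dimensional} kernel of $H_0$ cleanly: at each inductive step the solution $\tint h_{n+1}$ of \eqref{20140404:eq3} is only determined up to adding an element of $C(H_0)=\mb F\tint h_0\oplus\mb F\tint h_1$, so the naive order can drop by such an adjustment, and one must show that a consistent choice can be made (or that the order is eventually insensitive to it). I would deal with this by working modulo $\mb F\delta h_0+\mb F\delta h_1$ throughout, i.e. by passing to the quotient and showing the induced recursion is injective and order-increasing there; equivalently, one shows $H_1\delta h_n\notin H_0(\mb F\delta h_0+\mb F\delta h_1)$ for large $n$, which is again a leading-order comparison. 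A secondary subtlety is the two cases $\epsilon=0$ versus $\epsilon\ne0$ of \eqref{20140416:eq3a}--\eqref{20140416:eq3b}, where $\mc V$ contains $v^{-1}$ or $Q^{-1/2}$; there I would note that these localizations only enlarge the algebra by invertible elements, so the order/degree filtration still makes sense and the same leading-symbol argument applies, perhaps after clearing a power of $v$ or $Q^{1/2}$. The rest is bookkeeping.
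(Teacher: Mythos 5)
Your overall strategy --- equip $\mc V$ with a grading, show that the top-degree component of $\delta h_n$ is nonzero with degree strictly monotone in $n$, and conclude linear independence --- is exactly the paper's, and your observation that the $C(H_0)$-ambiguity is harmless is also correct (for $n\geq2$ the kernel elements $\delta h_0,\delta h_1$ have bounded degree, so they cannot affect the leading component; no passage to a quotient is needed). However, the technical heart of your argument, as you describe it, has a genuine gap on two counts. First, in the case $\epsilon\neq0$ the ``polynomial degree in the $u_i^{(n)}$'' does not interact well with the recursion: the paper must first pass to the variables $(Q,v)$, where $Q=\epsilon u-\frac12v^2-\gamma v'$, and then use the $\frac12\mb Z$-grading in which $Q^k$ has degree $-k$ while \emph{all other} generators $v,Q',v',\dots$ have degree $0$. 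This choice is far from the naive one and is what makes the leading-term recursion computable at all.

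Second, and more seriously, your proposed mechanism for ruling out stabilization --- that comparing leading symbols would force $H_1$ to be ``proportional to $H_0$ modulo an invertible left factor'' --- is not what happens and would not close the argument. The leading-degree recursion \eqref{20140424:eq3} is governed by the scalar quantities $r=\epsilon c_1+2\gamma\gamma_1+c\epsilon_1$, $s=\gamma\epsilon_1-\epsilon\gamma_1$ and $A$ of \eqref{20140424:eq3b}, and it genuinely \emph{does} degenerate for nonzero $H_1$: when $r=0$ the one-step leading recursion sending $f_{n-1}$ to $f_n$ vanishes, and one must pass to the two-step recursions \eqref{20140501:eq2a}--\eqref{20140501:eq2b} involving $f_{n-2}$, so that $\deg(f_n)$ grows like $\frac{3n}{2}$ rather than $3n$ and the induction ``leading term of $f_n$ is a nonzero image of the leading term of $f_{n-1}$'' fails as stated; when $r=s=0$ one falls back on the coefficient $A$ and growth of order $n$. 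Only the simultaneous vanishing $r=s=A=0$ forces $H_1=0$, which is exactly the excluded hypothesis, and this is established by an explicit case analysis, not by a single symbol comparison (indeed $H_1$, having constant coefficients of type \eqref{20140404:eq9}, is never proportional to $H_0$ of type \eqref{20140404:eq8}, so your criterion would vacuously ``succeed'' without proving anything). Your sketch correctly locates the difficulty but does not supply the idea that resolves it.
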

\begin{proof}
First, consider the case $\epsilon=0$.
In this case, the Lenard-Magri recursive formula \eqref{20140404:eq3} reads
\begin{equation}\label{20140416:eq1}
\begin{array}{l}
\displaystyle{
\vphantom{\Big(}
(u'+2u\partial+c\partial^3)f_n+v\partial g_n
=
(\alpha_1\partial+c_1\partial^3)f_{n-1}+(\beta_1\partial+\gamma_1\partial^2)g_{n-1}
\,,} \\
\displaystyle{
\vphantom{\Big(}
\partial(vf_n)
=
(\beta_1\partial-\gamma_1\partial^2)f_{n-1}
+\epsilon_1\partial g_{n-1}
\,.}
\end{array}
\end{equation}
The first two elements $\delta h_0$ and $\delta h_1$
form a basis of $\ker(H_0)$,
and they are
\begin{equation}\label{20140416:eq2}
\left(\begin{array}{l} f_0\\g_0 \end{array}\right)
=
\left(\begin{array}{l} 0\\1 \end{array}\right)
\,\,\text{ and }\,\,
\left(\begin{array}{l} f_1\\g_1 \end{array}\right)
=
\left(\begin{array}{l} v^{-1}\\-v^{-2}u+cv^{-3}v''-\frac32cv^{-4}(v')^2 \end{array}\right)
\,.
\end{equation}
Hence, they lie in $\mc V^2$, where $\mc V$ is as in \eqref{20140416:eq3a}.
In particular, by Corollary \ref{20140408:cor}, $H_0$ is strongly skewadjoint over $\mc V$,
and, by Corollary \ref{20140409:cor},
the recursive equation \eqref{20140416:eq1} has solution
$f_n,g_n\in\mc V$ for all $n\in\mb Z_+$.
The algebra $\mc V$ in \eqref{20140416:eq3a}
is $\mb Z$-graded by the total polynomial degree, where $u,v,u',v',\dots$ have all degree $+1$,
while $v^{-1}$ has degree $-1$.
We thus have
$\mc V=\bigoplus_{k\in\mb Z}\mc V[k]$,
where $\mc V[k]$ is the subspace of homogeneous elements of degree $k\in\mb Z$.
With respect to this grading, $\deg(f_1)=0$, and $\deg(f_1)=\deg(g_1)=-1$.
Furthermore, it follows by the recursive equations \eqref{20140416:eq1}
and an easy induction on $n$, that
$\deg(f_n),\deg(g_n)\leq-n$ for all $n\geq1$.
Let $f_n^0,g_n^0\in\mc V[-n]$ be the homogeneous components in $f_n$ and $g_n$
of degree $-n$.
Recalling \eqref{20140416:eq1}, they satisfy the recursive equations:
\begin{equation}\label{20140416:eq4}
\left(\begin{array}{l}
(u'+2u\partial)f_n^0+v\partial g_n^0
\\
\partial(vf_n^0)
\end{array}\right)
=
H_1(\partial)
\left(\begin{array}{l}f_{n-1}^0 \\ g_{n-1}^0\end{array}\right)
\,.
\end{equation}
We prove, by induction on $n\geq2$, 
that $\left(\begin{array}{l}f_n^0 \\ g_n^0\end{array}\right)\neq0$.
Since, by assumption, $H_1$ has constant coefficients and it is non-degenerate,
it follows that $\ker H_1\subset\mb F^2$.
Therefore, 
since $\left(\begin{array}{l}f_{n-1}^0 \\ g_{n-1}^0\end{array}\right)\in\mc V[-n+1]^2\backslash\{0\}$
and $\mc V[-n+1]^2\cap\mb F^2=0$,
it follows that $\left(\begin{array}{l}f_{n-1}^0 \\ g_{n-1}^0\end{array}\right)\not\in\ker H_1$.
In other words, the RHS of \eqref{20140416:eq4} is not zero,
which implies that $\left(\begin{array}{l}f_n^0 \\ g_n^0\end{array}\right)\neq0$.
In conclusion, $\max\{\deg(f_n),\deg(g_n)\}=-n$,
which implies that the $\delta h_n$'s are linearly independent.

Next, we consider the case $\epsilon\neq0$,
and we let $\mc V$ be as in \eqref{20140416:eq3b}.
Note that,
since $u=\frac1\epsilon(Q+\frac12 v^2+\gamma v')$, we have
$$
\mc V=\mb F[u,v,u',v',\dots][Q^{-\frac12}]=\mb F[Q^{\pm\frac12},v,Q',v',Q'',v'',\dots]
\,,
$$
so it can be considered as an algebra of differential functions in the new variables $Q$ and $v$.

We start analyzing the recursive equation \eqref{20140404:eq3}.
Letting, as before, 
$\delta h_n=
\left(\begin{array}{l} f_n \\ g_n \end{array}\right)$,
we get, by equation \eqref{20140404:eq8b},
$$
\begin{array}{l}
\displaystyle{
\left(
\begin{array}{cc} 
Q'+2Q\partial & 0 \\
\partial\circ v -\gamma \partial^2 & \epsilon \partial
\end{array}\right)
\left(\begin{array}{l} f_n \\ g_n \end{array}\right)
} \\
\displaystyle{
=
\left(\begin{array}{cc} 
\epsilon & -v-\gamma\partial \\
0 & 1
\end{array}\right)
\left(\begin{array}{cc} 
\alpha_1 \partial + c_1\partial^3 & \beta_1 \partial + \gamma_1 \partial^2 \\
\beta_1 \partial -\gamma_1 \partial^2 & \epsilon_1 \partial
\end{array}\right)
\left(\begin{array}{l} f_{n-1} \\ g_{n-1} \end{array}\right)
\,,}
\end{array}
$$
which gives the following system of differential equations
\begin{equation}\label{20140424:eq1}
\begin{array}{l}
\displaystyle{
(Q'+2Q\partial)f_n
=
\big(
\epsilon(\alpha_1 \partial + c_1\partial^3)
-(v+\gamma\partial)(\beta_1 \partial -\gamma_1 \partial^2)
\big)
f_{n-1}
} \\
\displaystyle{
\,\,\, \,\,\, \,\,\, \,\,\, \,\,\, \,\,\, \,\,\, \,\,\, \,\,\, \,\,\, \,\,\, \,\,\, 
+
\big(
\epsilon(\beta_1 \partial + \gamma_1 \partial^2)
-(v+\gamma\partial)\epsilon_1 \partial
\big)
g_{n-1}
\,,} \\
\displaystyle{
(vf_n)^\prime-\gamma f_n''+\epsilon g_n'
=
(\beta_1 \partial -\gamma_1 \partial^2)f_{n-1}
+\epsilon_1 \partial g_{n-1}
\,.}
\end{array}
\end{equation}
The second equation in \eqref{20140424:eq1}
allows us to express $g_n$ in terms of $f_n$, $f_{n-1}$, $g_{n-1}$, and their derivatives, 
up to adding a constant:
\begin{equation}\label{20140424:eq2}
g_n
=
\frac{\gamma}{\epsilon} f_n'
-\frac1\epsilon vf_n
-\frac{\gamma_1}{\epsilon} f_{n-1}'
+\frac{\beta_1}{\epsilon} f_{n-1}
+\frac{\epsilon_1}{\epsilon} g_{n-1}
+\text{ const}
\,.
\end{equation}
Using equation \eqref{20140424:eq2} with $n$ replaced by $n-1$,
we can rewrite the first equation in \eqref{20140424:eq1}
as follows
\begin{equation}\label{20140424:eq3}
\begin{array}{l}
\displaystyle{
(Q'+2Q\partial)f_n
=
rf_{n-1}'''+Af_{n-1}'+\frac12 A' f_{n-1}
+\frac{s}{\epsilon}(\gamma_1f_{n-2}'''-\epsilon_1 g_{n-2}'')
} \\
\displaystyle{
+\frac{\epsilon_1}{\epsilon}(-\gamma\beta_1+\gamma_1 v) f_{n-2}''
+\frac{\epsilon_1}{\epsilon}(\epsilon\beta_1-\epsilon_1 v) g_{n-2}'
+\frac{\beta_1}{\epsilon}(\epsilon\beta_1-\epsilon_1 v) f_{n-2}'
\,,}
\end{array}
\end{equation}
where we introduced the following notation:
\begin{equation}\label{20140424:eq3b}
r=\epsilon c_1+2\gamma\gamma_1+c\epsilon_1
\,\,,\,\,\,\,
s=\gamma\epsilon_1-\epsilon\gamma_1
\,\,,\,\,\,\,
A=\epsilon\alpha_1-2\beta_1 v+\frac{\epsilon_1}{\epsilon}v^2+2\frac{s}{\epsilon}v'
\,.
\end{equation}

We define on $\mc V$ a $\frac12\mb Z$ grading letting $\deg(Q^k)=-k$,
and letting all other variables $v,Q',v',Q'',v'',\dots$ have degree $0$.
Hence,
$\mc V=\bigoplus_{k\in\frac12\mb Z}\mc V[k]$, where
$$
\mc V[k]=Q^{-k}\mc V[0]
\,\,,\,\,\,\,
\mc V[0]=\mb F[v,Q',v',Q'',v'',\dots]
\,.
$$
Note that $\mc V[0]$ is a differential subalgebra of $\mc V$.
Moreover, the total derivative $\partial:\,\mc V\to\mc V$ decomposes as
$$
\partial=\partial_0+\partial_1\,,
$$
where $\partial_i:\,\mc V[k]\to\mc V[k+i],\,i=0,1$.
The action of $\partial_0$ and $\partial_1$ on $\mc V[k]$ is given by
$$
\partial_0=Q^{-k}\partial\circ Q^k
\,\,,\,\,\,\,
\partial_1=-kQ^{-1}Q'
\,.
$$
We want to argue, using equations \eqref{20140424:eq2} and \eqref{20140424:eq3},
that the degree of the elements $f_n$ and $g_n$ are strictly increasing in $n$
(thus proving the linear independence of the $\delta h_n$'s).

First, for $n=0$ and $n=1$ we have
$$
f_0=0
\,\,,\,\,\,\,
g_0=1
\,\,,\,\,\,\,
f_1=\frac12 \epsilon Q^{-\frac12}
\,\,,\,\,\,\,
g_1=-\frac14 \gamma Q^{-\frac32} Q' - \frac12 v Q^{-\frac12}
\,.
$$
Hence, $\deg(f_1)=\frac12$ and $\deg(g_1)$ is $1$ for $\gamma\neq0$ and $0$ for $\gamma=0$.

For $r\neq0$
it is easy to prove by induction on $n$ that, for $n\geq2$,
\begin{equation}\label{20140424:eq4}
\deg(f_n)
=
\deg(f_{n-1})+3
\,\,,\,\,\,\,
\deg(g_n)
=
\deg(f_n)+(1-\delta_{\gamma,0})
\,.
\end{equation}

For $r=0$ and $s\neq0$,
one can check, by induction on $n$, that, for $n\geq2$,
\begin{equation}\label{20140501:eq1}
\deg(f_n)
\leq
\left\{\begin{array}{l}
\frac{3n-3}{2}\,\,\text{ for }\,\, n \,\,\text{ even}\\
\frac{3n-2}{2}\,\,\text{ for }\,\, n \,\,\text{ odd}
\end{array}\right.
\,\,\text{ and }\,\,
\deg(g_n)
\leq
\left\{\begin{array}{l}
\frac{3n-1}{2}\,\,\text{ for }\,\, n \,\,\text{ even}\\
\frac{3n}{2}\,\,\text{ for }\,\, n \,\,\text{ odd}
\end{array}\right.
\,.
\end{equation}
Furthermore, 
we have the following recursive equations for the highest degree components 
in $f_n$ and $g_n$
(we denote by $f[k]$ the component in $\mc V[k]$ of $f\in\mc V$).
For $n$ odd:
\begin{equation}\label{20140501:eq2a}
\begin{array}{l}
\displaystyle{
f_n\Big[\frac{3n-2}{2}\Big]
=
-\frac{(3n-8)(3n-6)(3n-4)}{8(3n-3)}\frac{s^2}{\epsilon^2}(Q')^2 Q^{-3} 
f_{n-2}\Big[\frac{3n-8}{2}\Big]
\,,} \\
\displaystyle{
g_n\Big[\frac{3n}{2}\Big]
=
-\frac{(3n-2)}{2} \frac{\gamma}{\epsilon}Q' Q^{-1} f_n\Big[\frac{3n-2}{2}\Big]
\,,}
\end{array}
\end{equation}
and for $n$ even:
\begin{equation}\label{20140501:eq2b}
\begin{array}{l}
\displaystyle{
f_n\Big[\frac{3n-3}{2}\Big]
=
\frac{3n-5}{2(3n-4)}Q^{-1} f_{n-1}\Big[\frac{3n-5}{2}\Big]
} \\
\displaystyle{
-\frac{(3n-9)(3n-7)(3n-5)}{8(3n-4)}\frac{s^2}{\epsilon^2}(Q')^2 Q^{-3} 
f_{n-2}\Big[\frac{3n-9}{2}\Big]
\,,} \\
\displaystyle{
g_n\Big[\frac{3n-1}{2}\Big]
=
-\frac{(3n-3)}{2} \frac{\gamma}{\epsilon}Q' Q^{-1} f_n\Big[\frac{3n-3}{2}\Big]
\,.}
\end{array}
\end{equation}
Since, by assumption, $s\neq0$,
we immediately get from the first recursive formula in \eqref{20140501:eq2a}
that $f_n\Big[\frac{3n-2}{2}\Big]\neq0$ for every odd $n$,
namely $\deg(f_n)=\frac{3n-2}{2}$ for $n$ odd
(which suffices for the proof of integrability).
Actually, by exploiting more the recursions \eqref{20140501:eq2a} and \eqref{20140501:eq2b},
one can  prove that $f_n\Big[\frac{3n-3}{2}\Big]\neq0$ for even $n$ as well.
In conclusion, we get
\begin{equation}\label{20140501:eq3}
\deg(f_n)
=
\left\{\begin{array}{l}
\frac{3n-3}{2}\,\,\text{ for }\,\, n \,\,\text{ even}\\
\frac{3n-2}{2}\,\,\text{ for }\,\, n \,\,\text{ odd}
\end{array}\right.
\,\,\text{ and }\,\,
\deg(g_n)
=
\deg(f_n)+(1-\delta_{\gamma,0})
\,.
\end{equation}

Finally, we consider the case when $s=r=0$.
It is easy to check,
by induction on $n$, that, for $n\geq2$,
\begin{equation}\label{20140501:eq1x}
\deg(f_n)
\leq n-\frac12
\,\,\text{ and }\,\,
\deg(g_n)
\leq n+\frac12
\,.
\end{equation}
Using \eqref{20140501:eq1x},
we can find  recursive equations for the highest degree components of $f_n$ and $g_n$:
\begin{equation}\label{20140501:eq2x}
\begin{array}{l}
\displaystyle{
f_n\Big[n-\frac12\Big]
=
\frac{n-\frac32}{2(n-1)} A Q^{-1} f_{n-1}\Big[n-\frac32\Big]
\,,} \\
\displaystyle{
g_n\Big[n+\frac12\Big]
=
-\Big(n-\frac12\Big)\frac{\gamma}{\epsilon} Q'Q^{-1} f_n\Big[n-\frac12\Big]
\,.}
\end{array}
\end{equation}
From the first recursion \eqref{20140501:eq2x} 
we immediately get
that $f_n\Big[n-\frac12\Big]\neq0$ unless $A=0$,
and using this we obtain that 
equalities in \eqref{20140501:eq1x} hold,
unless $A=0$.
To conclude, we just observe that $s=r=A=0$ implies that $H_1=0$.
\end{proof}

We can compute explicitly 
the first non-trivial equation of the hierarchy
$\frac{d{\bf u}}{dt_2}=P_2=H_1(\partial)\delta h_1$,
and the conserved density $h_2$,
which is a solution of $H_0(\partial)\delta h_2=P_2$,
in both cases, when $\epsilon=0$ or $\epsilon\neq0$.
Since $\int h_0=\int v \in C(H_0)\cap C(H_1)$, we construct the Lenard-Magri scheme 
starting with $\int h_1 \in C(H_0)$ equal to $\int (\frac{u}{v}-\frac{c}{2}\frac{(v')^2}{v^3})$ 
in the case $\epsilon=0$ and $\int Q^{\frac{1}{2}}$ in the case $\epsilon\ne 0$.

{\bf Case A1:} $\epsilon=0$,
and $\mc V=\mb F[u,v^{\pm1},u',v',u'',v''\dots]$.
In this case, the first non-zero equation is:
\begin{equation}\label{turhan:eq1}
\begin{array}{rcl}
\displaystyle{
\frac{du}{dt_2} 
}
&=&
\displaystyle{
c\gamma_1\frac{v^{(iv)}}{v^3}
-9c\gamma_1\frac{v'''v'}{v^4}
-c_1\frac{v'''}{v^2}
+c\beta_1\frac{v'''}{v^3}
-6c\gamma_1\frac{{v''}^2}{v^4}
+42c\gamma_1\frac{v''{v'}^2}{v^5} 
} \\
&&
\displaystyle{
+6c_1\frac{v''v'}{v^3}
-6c\beta_1\frac{v''v'}{v^4}
+2\gamma_1\frac{v''u}{v^3}
-\gamma_1\frac{u''}{v^2}
-30c\gamma_1\frac{{v'}^4}{v^6}
-6c_1\frac{{v'}^3}{v^4}
} \\
&&
\displaystyle{
+6c\beta_1\frac{{v'}^3}{v^5}
-6\gamma_1\frac{{v'}^2{u}}{v^4}
+4\gamma_1\frac{u'v'}{v^3}
-\alpha_1\frac{v'}{v^2}
+2\beta_1\frac{v'u}{v^3}
-\beta_1\frac{u'}{v^2}\,,
} \\
\displaystyle{
\frac{dv}{dt_2} 
}
&=&
\displaystyle{
c\epsilon_1\frac{v'''}{v^3}
-6c\epsilon_1\frac{v''v'}{v^4}
+\gamma_1\frac{v''}{v^2}
+6c\epsilon_1\frac{{v'}^3}{v^5}
-2\gamma_1\frac{{v'}^2}{v^3}
-\beta_1\frac{v'}{v^2}
} \\
&&
\displaystyle{
+2\epsilon_1\frac{v'u}{v^3}
-\epsilon_1\frac{u'}{v^2}
\,,}
\end{array} 
\end{equation}
and the conserved density $h_2$ is:
$$
\begin{array}{l}
\displaystyle{
h_2=
-\frac{c^2\epsilon_1}{2}\frac{{v''}^2}{v^5}
+\frac{15c^2\epsilon_1}{8}\frac{{v'}^4}{v^7}
-\frac{c\gamma_1}{2}\frac{{v'}^3}{v^5}
+\frac{c_1}{2}\frac{{v'}^2}{v^3}
-\frac{3c\beta_1}{2}\frac{{v'}^2}{v^4}
} \\
\displaystyle{
+\frac{5c\epsilon_1}{2}\frac{{v'}^2}{v^5}
-c\epsilon_1\frac{{v'}{u'}}{v^4}
+\gamma_1\frac{{v'}{u}}{v^3}
-\frac{\alpha_1}{2v} 
+\beta_1\frac{u}{v^2}
-\frac{\epsilon_1}{2}\frac{u^2}{v^3}
\,.}
\end{array}
$$

{\bf Case A2:} $\epsilon\neq0$,
and $\mc V=\mb F[u,v,u',v',u'',v'',\dots,Q^{-\frac12}]$.
In this case, the first non-zero equation is:
\begin{equation}\label{turhan:eq2}
\begin{array}{rcl}
\displaystyle{
\frac{du}{dt_2}
}
&=&
\displaystyle{
\vphantom{\Big(}
Q^{-\frac{3}{2}}
\Big( 
\gamma (\gamma\gamma_1+\epsilon c_1) {v^{(iv)}} 
+ \epsilon c_1 {v'''} {v} 
+ \gamma^2 \beta_1 {v'''}
- \epsilon(\gamma\gamma_1+\epsilon c_1) {u'''} 
} \\
&&
\displaystyle{
\vphantom{\Big(}
+ (\gamma\gamma_1+3\epsilon c_1){v''}{v'}
- \gamma_1 {v''}{v}^2 
+ \gamma\epsilon\alpha_1 {v''}
+ \epsilon\gamma_1 {u''} {v} 
- \gamma\epsilon\beta_1 {u''} 
} \\
&&
\displaystyle{
\vphantom{\Big(}
-3 \gamma_1 {v'}^2 {v}
+ \gamma\beta_1{v'}^2  
+2 \epsilon\gamma_1 {v'} {u'}
- \beta_1  {v'} {v}^2 
+ \epsilon\alpha_1 {v'} {v} 
+ \epsilon\beta_1 {u'} {v}
} \\
&& 
\displaystyle{
\vphantom{\Big(}
- \epsilon^2\alpha_1 {u'}   
\Big)
+\frac{3}{2}Q^{-\frac{5}{2}}\Big(
3 \gamma(\gamma\gamma_1+\epsilon c_1) 
\big( \gamma {v'''} {v''}
+{v'''} {v'} {v}
-\epsilon {v'''}{u'} \big)
} \\
&&
\displaystyle{
\vphantom{\Big(}
+\gamma(2\gamma\gamma_1+3\epsilon c_1){v''}^2{v}
+\gamma^3\beta_1 {v''}^2
-3\gamma(\gamma\gamma_1+\epsilon c_1)\big(
 \epsilon {v''}{u''} \!-\! {v''}{v'}^2 \big)
} \\
&&
\displaystyle{
\vphantom{\Big(}
+ (\gamma\gamma_1+3\epsilon c_1){v''}{v'}{v}^2
+2 \gamma^2\beta_1 {v''}{v'}{v}
- \epsilon (\gamma  \gamma_1+3\epsilon c_1){v''} {u'}{v}
} \\
&&
\displaystyle{
\vphantom{\Big(}
-2 \gamma^2\epsilon\beta_1 {v''}{u'}
-3(\gamma\gamma_1+\epsilon c_1)
\big(
\epsilon {u''}{v'}{v}
-\epsilon^2 {u''}{u'}
-{v'}^3 {v}
+ \epsilon {v'}^2{u'}\big)
} \\
&&
\displaystyle{
\vphantom{\Big(}
- \gamma_1{v'}^2{v}^3
+ \gamma\beta_1{v'}^2{v}^2
+2 \epsilon\gamma_1{v'}{u'}{v}^2
-2 \gamma\epsilon\beta_1{v'}{u'}{v}
- \epsilon^2\gamma_1{u'}^2{v}
} \\
&&
\displaystyle{
\vphantom{\Big(}
+ \gamma\epsilon^2\beta_1 {u'}^2  
\Big) 
+\frac{15}{4}Q^{-\frac{7}{2}}\big(\gamma\gamma_1+\epsilon c_1\big)
\Big( 
\gamma^3 {v''}^3
+3\gamma^2{v''}^2{v'}{v}
} \\
&&
\displaystyle{
\vphantom{\Big(}
-3\gamma^2\epsilon {v''}^2{u'}
+3\gamma {v''}{v'}^2{v}^2
-6\gamma\epsilon{v''}{v'}{u'}{v}
+3\gamma\epsilon^2{v''}{u'}^2
+{v'}^3{v}^3
} \\
&&
\displaystyle{
\vphantom{\Big(}
-3\epsilon {v'}^2{u'}{v}^2
+3\epsilon^2 {v'}{u'}^2{v}
- \epsilon^3 {u'}^3   
\Big) 
- 2Q^{-\frac{1}{2}}
\Big(
\gamma_1 {v''} + \beta_1 {v'} 
\Big)\,,
} \\
\displaystyle{
\vphantom{\Big(}
\frac{dv}{dt_2} 
}
&=&
\displaystyle{
\vphantom{\Big(}
Q^{-\frac{3}{2}}
\Big( 
\gamma(\gamma\epsilon_1-\epsilon\gamma_1){v'''} 
-\epsilon\gamma_1 {v''}{v}
+\gamma\epsilon\beta_1 {v''}
- \epsilon(\gamma\epsilon_1-\epsilon\gamma_1){u''}
} \\
&&
\displaystyle{
\vphantom{\Big(}
+(\gamma\epsilon_1-\epsilon\gamma_1){v'}^2
-\epsilon_1 {v'}{v}^2
+\epsilon\beta_1{v'}{v} 
+\epsilon\epsilon_1{u'}{v}
-\epsilon^2\beta_1{u'}   
\Big)
} \\
&&
\displaystyle{
\vphantom{\Big(}
+\frac{3}{2}Q^{-\frac{5}{2}}
\big(\gamma\epsilon_1-\epsilon\gamma_1 \big)
\Big( 
\gamma^2{v''}^2
+2\gamma{v''}{v'}{v}
-2\gamma\epsilon{v''}{u'}
+ {v'}^2{v}^2
} \\
&&
\displaystyle{
\vphantom{\Big(}
-2\epsilon{v'}{u'}{v}
+\epsilon^2{u'}^2  
\Big)
-2 Q^{-\frac{1}{2}}
\Big(
\epsilon_1 {v'}
\Big)
\,,}
\end{array}
\end{equation}
and the conserved density $h_2$ is
$$
\begin{array}{l}
\displaystyle{
h_2
=
\frac{1}{4}Q^{-\frac{5}{2}}
\Big(
\frac{1}{4}Q^{\frac{-5}{2}}
\Big(
\gamma^2(\epsilon c_1-\frac{\gamma^2}{\epsilon}\epsilon_1+2\gamma\gamma_1){v''}^2
+(\frac{\gamma^2}{\epsilon}\epsilon_1 - 2\gamma\gamma_1-\epsilon c_1) {v'}^2 {v}^2
} \\
\displaystyle{
-2(\gamma^2\epsilon_1-2\epsilon\gamma\gamma_1-\epsilon^2 c_1){v'}{u'}{v}
+\epsilon(\gamma^2\epsilon_1-2\epsilon\gamma\gamma_1-\epsilon^2 c_1) {u'}^2 
\Big) 
} \\
\displaystyle{
+
\frac{1}{3}Q^{\frac{-3}{2}}
\Big(
(\epsilon^2 c_1
-\gamma^2\epsilon_1+2\epsilon\gamma\gamma_1) {u''}
-\epsilon c_1 {v''}{v}
+(\frac{\gamma^2 }{\epsilon}\epsilon_1 - 2\gamma\gamma_1-\epsilon c_1) {v'}^2
} \\
\displaystyle{
+ (2\gamma_1-\frac{\gamma}{\epsilon} \epsilon_1){v'}{v}^2
+(\gamma\epsilon_1-2\epsilon\gamma_1){u'}{v}  
\Big) 
+
\frac{1}{3} Q^{\frac{-1}{2}}
\Big(
(10\gamma_1-8\frac{\gamma}{\epsilon}\epsilon_1) {v'}
} \\
\displaystyle{
-3\frac{\epsilon_1}{\epsilon} {v}^2
+6\beta_1 {v}
-3\epsilon\alpha_1 
\Big)
+
\frac{2}{3}\frac{\epsilon_1}{\epsilon} Q^{\frac{1}{2}}
\,.}
\end{array} 
$$

\begin{remark}\label{20140502:rem2}
Since in Case A1 the algebra of differential functions $\mc V[\log v]$ is normal
(where $\mc V$ is as in \eqref{20140416:eq3a}),
all conserved densities $h_n$ can be chosen in this algebra.
It is not difficult to show that actually they can be chosen in $\mc V$.
We conjecture 
that in Case A2 all conserved densities can be chosen in $\mc V$ as well
(where $\mc V$ is as in \eqref{20140416:eq3b}).
\end{remark}

\section{Bi-Hamiltonian hierarchies corresponding to  $H_0$ of type \eqref{20140404:eq9}}
\label{sec:5}

In this section we study the applicability of the Lenard-Magri scheme
in case B, when $H_0=H^{(c,\alpha,\beta,\gamma,\epsilon)}$
is as in \eqref{20140404:eq9}, with $c,\alpha,\beta,\gamma,\epsilon\in\mb F$,
such that $(p,q)\neq(0,0)$
(by Corollary \ref{20140408:cor2}, exactly in these cases $H_0$ is strongly skew-adjoint).
The case when both Poisson structures $H_0$ and $H_1$
have constant coefficients is not interesting 
(the Lenard-Magri scheme in this case is confined within quasiconstants).
Hence, we only consider the case when $H_1=H_{(c_1,\gamma_1,\epsilon_1)}$,
is as in \eqref{20140404:eq8}, with $c_1,\gamma_1,\epsilon_1\in\mb F$.

Recall that the space of Casimirs $C(H_0)$ is a Lie algebra with respect 
to the Lie bracket $\{\cdot\,,\,\cdot\}_1$.
We can use Lemma \ref{20140410:lem} to describe this Lie algebra explicitly.
\begin{lemma}\label{20140410:lemb}
Let $p\neq0$.
Consider the basis elements $\tint f_i$, $i=1,2,3,4$, of $C(H_0)$, 
as in Lemma \ref{20140410:lem}.
Then $\tint f_1$ is a central element of the Lie algebra $(C(H_0),\{\cdot\,,\,\cdot\}_1)$,
and the Lie brackets among all other basis elements are as follows:
\begin{enumerate}[(i)]
\item
If $q\epsilon\neq0$, then
$$
\begin{array}{l}
\displaystyle{
\{\tint f_2,\tint f_3\}_1=-\sqrt{\frac{q}{p}}\tint f_4
\,\,,\,\,\,\,
\{\tint f_2,\tint f_4\}_1=\sqrt{\frac{q}{p}}\tint f_3
\,,} \\
\displaystyle{
\{\tint f_3,\tint f_4\}_1
=-\sqrt{\frac{q}{p}}\frac{\beta}{\epsilon}\tint f_1
+\sqrt{\frac{q}{p}}\tint f_2
\,.}
\end{array}
$$
\item
If $\epsilon=0,q\neq0$, then
$$
\begin{array}{l}
\displaystyle{
\{\tint f_2,\tint f_3\}_1
=
\frac{\beta}{\gamma}\tint f_3
\,\,,\,\,\,\,
\{\tint f_2,\tint f_4\}_1
=
-\frac{\beta}{\gamma}\tint f_4
\,,} \\
\displaystyle{
\{\tint f_3,\tint f_4\}_1
=
-2\frac{\beta}{\gamma}\tint f_1
\,.}
\end{array}
$$
\item
If $\epsilon\neq0$, $q=0$, then
$$
\begin{array}{l}
\displaystyle{
\{\tint f_2,\tint f_3\}_1
=
-\frac{\gamma}{\epsilon}\tint f_1+2\tint f_4
\,\,,\,\,\,\,
\{\tint f_2,\tint f_4\}_1
=
\frac{\beta}{\epsilon}\tint f_1-\tint f_2
\,,} \\
\displaystyle{
\{\tint f_3,\tint f_4\}_1
=
\tint f_3
\,.}
\end{array}
$$
\item
If $\epsilon=q=0$, then
$$
\begin{array}{l}
\displaystyle{
\{\tint f_2,\tint f_3\}_1
=
-\tint f_2+\frac{\alpha}{\gamma}\tint f_4
\,\,,\,\,\,\,
\{\tint f_2,\tint f_4\}_1
=
\tint f_1
\,,} \\
\displaystyle{
\{\tint f_3,\tint f_4\}_1
=
-\tint f_4
\,.}
\end{array}
$$
\end{enumerate}
Consequently, 
the center of the Lie algebra $C(H_0)$
is spanned by $\tint v$.
\end{lemma}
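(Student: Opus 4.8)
The plan is to compute each Lie bracket $\{\tint f_i,\tint f_j\}_1$ directly from the definition \eqref{20130222:eq7}, using the explicit formulas for the Casimir densities $f_i$ from Lemma \ref{20140410:lem} and the explicit form of $H_1=H_{(c_1,\gamma_1,\epsilon_1)}$ in \eqref{20140404:eq8}. Concretely, for $\tint f,\tint g\in C(H_0)$ we have $\{\tint f,\tint g\}_1=\tint\delta g\cdot H_1(\partial)\delta f$. Since the $f_i$ are fixed explicit differential functions (for $i=1,2$ they are just $v$ and $u$; for $i=3,4$ they involve the quasiconstants $\cos,\sin,e^{\pm},x$ depending on the case), the variational derivatives $\delta f_i$ are short vectors, and applying $H_1(\partial)$ to them and pairing produces an element of $\mc V/\partial\mc V$ that one then recognizes as a linear combination (over $\mc C$) of the $\tint f_k$'s.

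First I would dispose of $\tint f_1=\tint v$: since $\delta v=\left(\begin{array}{l}0\\1\end{array}\right)$ and the $(2,2)$-entry of $H_1$ is $\epsilon_1\partial$, while the $(2,1)$-entry is $\partial\circ v-\gamma_1\partial^2$, one checks $H_1(\partial)\delta v$ is a total derivative vector, hence $\tint f_1$ pairs to zero against everything; this shows $\tint f_1$ is central. (Alternatively: $\tint v\in C(H_1)$ as already noted in Section \ref{sec:4}, and a Casimir of $H_1$ is automatically central in $(C(H_0),\{\cdot,\cdot\}_1)$.) Next I would handle $\tint f_2=\tint u$, where $\delta u=\left(\begin{array}{l}1\\0\end{array}\right)$, so $H_1(\partial)\delta u=\left(\begin{array}{l}(u'+2u\partial+c_1\partial^3)\cdot1\\(\partial\circ v-\gamma_1\partial^2)\cdot1\end{array}\right)=\left(\begin{array}{l}u'\\v'\end{array}\right)$ — note this is exactly the image of a linear vector field, which is the key simplification that keeps everything finite-dimensional. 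Then for each of the four cases (i)–(iv) I would pair this, and the analogous $H_1(\partial)\delta f_3$, against $\delta f_3,\delta f_4$, integrate by parts to move all the $\partial$'s, and collect terms; the quasiconstant factors ($\cos\sqrt{q/p}\,x$ etc.) are constants for $\partial$ only up to the explicit derivatives $\partial\cos\sqrt{q/p}\,x=-\sqrt{q/p}\sin\sqrt{q/p}\,x$, and it is precisely these derivatives that generate the structure constants $\pm\sqrt{q/p}$, $\beta/\gamma$, etc. appearing in (i)–(iv).

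The main obstacle is purely computational bookkeeping: in each case there are three brackets to compute, each requiring expansion of $H_1$ applied to a two-component vector with quasiconstant coefficients, then an integration-by-parts reduction modulo $\partial\mc V$, then recognition of the result in the basis $\{\tint f_1,\tint f_2,\tint f_3,\tint f_4\}$. The recognition step is where one must be careful: after integrating by parts one gets some differential polynomial, and one must check it equals (a constant combination of) the $f_k$'s \emph{up to a total derivative}, not on the nose. A useful sanity check at the end is that the resulting bracket is skew-symmetric and satisfies the Jacobi identity (it must, by Remark \ref{20140408:rem}), and that in each case the $3$-dimensional span of $\tint f_2,\tint f_3,\tint f_4$ is a Lie algebra with $\tint f_1$ central — indeed in cases (i),(iii),(iv) the four-dimensional algebra is a central extension of a $3$-dimensional solvable or $\mathfrak{sl}_2$-type algebra. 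Finally, the concluding sentence — that the center of $C(H_0)$ is exactly $\mb F\tint v$ — follows by inspecting the structure constants in each case: in (i) the derived subalgebra already contains $\tint f_2,\tint f_3,\tint f_4$ modulo $\tint f_1$, forcing any central element into $\mb F\tint f_1$; in (ii)–(iv) one sees directly from the displayed brackets that no nonzero combination of $\tint f_2,\tint f_3,\tint f_4$ is central, and $\tint f_1$ is.
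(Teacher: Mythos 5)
Your proposal is correct and coincides with the paper's approach: the paper's entire proof of this lemma is the single word ``Straightforward,'' i.e.\ exactly the direct computation you outline (apply $H_1(\partial)$ to $\delta f_i$, pair against $\delta f_j$, integrate by parts modulo $\partial\mc V$, and identify the result in the basis $\{\tint f_k\}$). Your two organizing observations --- that $\delta v\in\ker H_1$ makes $\tint f_1$ central, and that the structure constants arise from differentiating the quasiconstant factors in $f_3,f_4$ --- are the right way to carry that computation out, and your case-by-case check that no nonzero combination of $\tint f_2,\tint f_3,\tint f_4$ is central correctly yields the final assertion.
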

\begin{proof}
Straightforward.
\end{proof}
\begin{remark}\label{20140424:rem}
If $p=0,q\neq0$,
then by Lemma \ref{20140410:lem}
the Lie algebra $C(H_0)$ is 2-dimensional abelian.
\end{remark}

By Theorem \ref{20140403:thm}(a),
for $\tint h_0\in C(H_0)$ the first step \eqref{20140404:eq1} of Lenard-Magri scheme 
has a solution $\tint h_1\in\mc V/\partial\mc V$
if and only if $\tint h_0$ lies in the center of the Lie algebra $(C(H_0),\{\cdot\,,\,\cdot\}_1)$.
Therefore, if $p\neq0$, by Lemma \ref{20140410:lemb},
we should start the Lenard-Magri scheme with $\tint h_0=\tint v$.
Since $\delta v$ lies in the kernel of $H_1(\partial)$,
by \eqref{20140404:eq2} we get that $\tint h_1\in C(H_0)$,
and therefore, by Theorem \ref{20140403:thm}(a), 
we can proceed to the next step only if $\tint h_1\in\mb F\tint v$.
Hence, in this case, the Lenard-Magri scheme is confined in $\mb F\tint v$ at each step
and it does not produce any integrable hierarchy.

Let us then consider the case when $p=0$, $q\neq0$.
We take $\mc V=\mb F[u,v,u',v',\dots]$.
We can choose in this case $\tint h_0=\tint v$ and $\tint h_1=\tint u$
and, by Corollary \ref{20140409:cor},
there exists an infinite sequence $\{\tint h_n\}_{n\in\mb Z_+}$
satisfying the Lenard-Magri recursive conditions \eqref{20140404:eq3}
for each $n\in\mb Z_+$.
We thus get an integrable hierarchy of bi-Hamiltonian equations as in \eqref{20140409:eq3},
provided that the integrals of motion $\tint h_n$ are linearly independent.

Linear independence in this case is easy to prove by considering the total polynomial degree.
Indeed, 
the first terms in the recurrence $H_0\delta h_n=H_1\delta h_{n-1}$ are
$\delta h_0=\left(\begin{array}{l} 0 \\ 1 \end{array}\right)$,
$\delta h_1=\left(\begin{array}{l} 1 \\ 0 \end{array}\right)$
and
$$
\delta h_2
=\left(\begin{array}{l} 
\epsilon u-\gamma v'-\beta v \\ 
-\beta u+\alpha v+\gamma u'+c v''
\end{array}\right)
\,,
$$
of total degrees $0$, $0$, and $1$ respectively.
Note that the homogeneous component of $\delta h_n=\left(\begin{array}{l} f_n \\ g_n \end{array}\right)$ 
of highest total degree satisfies the recurrence equation
$$
H_0(\partial)
\left(\begin{array}{l} 
f_n^0 \\ 
g_n^0 
\end{array}\right)
=
\left(\begin{array}{ll} 
u'+2u\partial  & v\partial \\
 \partial\circ v & 0 
\end{array}\right)
\left(\begin{array}{l}
f_{n-1}^0 \\ 
g_{n-1}^0
\end{array}\right)
\,.
$$
Since $H_0$ is a non-degenerate constant coefficients matrix differential operator,
it follows that, up to adding constant multiples of $\left(\begin{array}{l} 0 \\ 1 \end{array}\right)$
(namely elements of the kernel of $H_1$ in $\mb F[u,v,u',v',\dots]^2$), 
$f_n^0$ and $g_n^0$ are homogeneous of degree $n-1$.
In conclusion, 
$\delta h_n=\left(\begin{array}{l} f_n \\ g_n \end{array}\right)$ 
has both components of total polynomial degree $n-1$.
Therefore, the elements $\delta h_n$, $n\in\mb Z_+$, are linearly independent.

The first non-zero equation of the hierarchy starting with $\tint h_1=\tint u$, is $x$-translation symmetry:
$\frac{d{\bf u}}{dt_2}=\left(\begin{array}{c}u' \\ v'\end{array}\right)$.
We compute explicitly the conserved density $h_2$ 
and the corresponding first non-trivial equation of the hierarchy
$\frac{d{\bf u}}{dt_{3}}=H_1(\partial)\delta h_2$,
for all values of the parameters 
$c,\epsilon,\gamma,\alpha,\beta,c_1,\gamma_1,\epsilon_1 \in \mb F$ 
satisfying $p\,(=c\epsilon+\gamma^2)=0$ and $q\,(=\epsilon\alpha-\beta^2) \ne0$. 
We have
$$
h_2
=
-\frac{c}{2}{v'}^2
-\gamma v'u
+\frac{\alpha}{2}v^2
-\beta uv 
+\frac{\epsilon}{2}u^2
\,,
$$
and the first non-trivial equation is
\begin{equation}\label{turhan:eq4}
\begin{array}{rcl}
\displaystyle{
\frac{du}{dt_3}
}
&=&
\displaystyle{
\vphantom{\Big(}
(c\gamma_1-c_1\gamma){v^{(iv)}}
+cv'''v
-\beta c_1 v'''
+(\epsilon c_1+\gamma\gamma_1)u'''
} \\
&&
\displaystyle{
\vphantom{\Big(}
-2\gamma v''u
+\alpha\gamma_1 v''
+\gamma u''v
-\beta\gamma_1 u''
-\gamma v'u'
+\alpha v'v
} \\
&&
\displaystyle{
\vphantom{\Big(}
-2\beta v'u
-2\beta vu'
+3\epsilon u'u
\,,} \\
\displaystyle{
\vphantom{\Big(}
\frac{dv}{dt_3} 
}
&=&
\displaystyle{
\vphantom{\Big(}
(c\epsilon_1+\gamma\gamma_1)v'''
-\gamma v''v
+\beta\gamma_1 v''
+(\gamma\epsilon_1-\epsilon\gamma_1)u''
-\gamma {v'}^2
} \\
&&
\displaystyle{
\vphantom{\Big(}
-2\beta v'v
+\epsilon v'u
+\epsilon vu'
+\alpha\epsilon_1 v'
-\beta\epsilon_1 u'
\,.}
\end{array} 
\end{equation}

\begin{remark}\label{turhan:rem1}
The $(\epsilon_1,\gamma_1)=(0,0)$ subcase of $H_1$ in the present section 
coincides with $H_0$ of the Case A1 in Section \ref{sec:4}.
So for these particular values of the parameters we have two-sided Lenard-Magri scheme, 
the first side of which is the above 
one restricted with $(\epsilon_1,\gamma_1)=(0,0)$, and the second side is the one in Case A1
of Section \ref{sec:4} (with subscripted parameters unsubscripted and unsubscripted parameters 
subscripted to be consistent with the present section).
\end{remark}
\begin{remark}\label{turhan:rem2}
For $\epsilon_1\neq0$, $H_1$ of the present section is $H_0$ in Case A2
of Section \ref{sec:4}. 
Letting, as in case Case A2 of Section \ref{sec:4},
$c=\frac{-\gamma^2}{\epsilon}$, we again get a two-sided
Lenard-Magri scheme. 
\end{remark}
%

\section{Concluding remarks}
\label{sec:6}
The equation in Case A1 of Section~\ref{sec:4} is triangular 
for $\epsilon_1=0$, 
in case of which a solution reduces to succesive solutions of scalar equations. 
For $\epsilon_1\ne0$ the equation in the variables $w,v$ 
where $w=u-\frac{\beta_1}{\epsilon_1}v-\frac{\gamma_1}{\epsilon_1}v'$ becomes:
\begin{equation}\label{6.1}
\begin{array}{rcl}
\displaystyle{
\frac{dw}{dt} 
}
&=&
\displaystyle{
\frac{1}{\epsilon_1}\Big(
\big(c_1\epsilon_1+\gamma_1^2\big)
\Big(\frac1v\Big)^{\prime\prime\prime}
+\big(\alpha_1\epsilon_1-\beta_1^2\big)
\Big(\frac1v\Big)^{\prime}
\Big),} \\
\displaystyle{
\frac{dv}{dt} 
}
&=&
\displaystyle{
-c\epsilon_1
\frac{1}{v}\Big(\frac1v\Big)^{\prime\prime\prime}
-\epsilon_1\Big(
\frac{w}{v^2}
\Big)^\prime
\,.}
\end{array} 
\end{equation}
For $c=0$ this is, 
up to a renaming of coefficients, the Harry Dym type equation considered in \cite{AF88,BDSK09}.
For $c\neq0$ this equation seems to be new.


The equation in Case A2 of Section~\ref{sec:4} in the variables $w,v$ where $w=Q^{-\frac12}$, becomes:
\begin{equation}\label{6.2}
\begin{array}{rcl}
\displaystyle{
\frac{dw}{dt} 
}
&=&
\displaystyle{
(\gamma^2\epsilon_1 -2\epsilon\gamma\gamma_1-\epsilon^2c_1) w^3w'''
+(\epsilon\gamma_1-\gamma\epsilon_1) (w^4v''+2w^3w'v')}
\\
&&
\displaystyle{
-\epsilon_1 (w^4vv'+ w^3w'v^2)
+\epsilon\beta_1 (w^4v'+2w^3w'v)
-\epsilon^2\alpha_1 w^3w',
}
\\
\displaystyle{
\frac{dv}{dt} 
}
&=&
\displaystyle{
2(\gamma\epsilon_1-\epsilon\gamma_1) w''-2\epsilon_1(w'v+wv') +2\epsilon\beta_1w'
\,.}
\end{array}
\end{equation} 
This equation seems to be new for 
$\gamma^2\epsilon_1 -2\epsilon\gamma\gamma_1-\epsilon^2c_1\neq0$.
In the case 
$\gamma^2\epsilon_1 -2\epsilon\gamma\gamma_1-\epsilon^2c_1=0$, 
the remaining second order equation is an equation considered in \cite{MSY87}. 
Further reduction with $\gamma_1=\frac{\epsilon_1}{\epsilon}\gamma$, 
gives a first order equation which corresponds to those considered in \cite{GN90}.

Recall that in Section~\ref{sec:5} the Lenard-Magri scheme works
if $p\,(=c\epsilon+\gamma^2)=0$ and $q\,(=\alpha\epsilon-\beta^2)\neq0$.
If $c=\gamma=0$, $\epsilon c_1\neq0$, the equation in the variables 
$w=\frac{1}{q} \big(\epsilon^2c_1u -\beta\epsilon c_1 v +\frac13 \beta^2\epsilon_1c_1\big)$,
$z=\frac{1}{\sqrt{q}} \big(\epsilon c_1 v - \beta\epsilon_1 c_1 \big)$, 
$\tau=\frac{\epsilon^2c_1^2}{q^{3/2}} t$, 
$y=\frac{1}{\epsilon c_1\sqrt{q}} x$,
becomes:
\begin{equation}\label{turhan:eq}
\begin{array}{rcl}
\displaystyle{
\frac{dw}{d\tau} 
}
&=&
\displaystyle{
w'''+3ww'+zz'+\gamma_1z'',
}
\\
\displaystyle{
\frac{dz}{d\tau} 
}
&=&
\displaystyle{
(wz)'+Kz'-\gamma_1w''
\,,}
\end{array}
\end{equation} 
where $K=\frac{1}{3q}\epsilon_1c_1(3\alpha\epsilon-4\beta^2)$, 
and primes on $w$ and $z$ denote derivative with respect to $y$.  
The parameters $K$ and $\gamma_1$ are essential parameters, 
i.e. they cannot be removed by rescaling of the variables in the equation.
%
The 2-parameter family of equations \eqref{turhan:eq}, which seems to be new,
turns into the Fokas-Liu equation \cite{FL96} for $\gamma_1=0$,
into the Kupershmidt equaiton \cite{Kup85b} for $K=0$,
and into the Ito equation \cite{Ito82} for $\gamma_1=K=0$.
 
In the cases $c=0$, $\epsilon\ne0$, $c_1=0$, and  $c=0$, $\epsilon=0$, upon a potentiation $u=w'$, 
equation \eqref{turhan:eq4} becomes second order of type considered in \cite{MSY87}. 
The case $c=0$, $\epsilon=0$, $c_1=0$ corresponds, by a similar transformation as above, 
to the Kaup-Broer equation \cite{Kup85a}. 

%
For $c\neq0$, and 
$\gamma(c\epsilon_1+\gamma\gamma_1)(c^2\epsilon_1+2c\gamma\gamma_1-c_1\gamma^2)\neq0$,
we transform equation \eqref{turhan:eq4} by the change of variable
$u=m_4w'-m_2 z', v=-m_3 w+m_1 z$, 
and rescaling $y=k_x x$,  
$\tau=k_t t$ with constant transformation parameters 
$m_1=1,m_2=\frac{c}{\gamma},m_3=1,
m_4=\frac{\gamma c_1-c\gamma_1 }{c\epsilon_1+\gamma\gamma_1},
k_x=\frac{-\gamma}{c\epsilon_1+\gamma\gamma_1},
k_t=\frac{c(c\epsilon_1+\gamma\gamma_1)^4}
{\gamma^2(c^2\epsilon_1 +2c\gamma\gamma_1 -\gamma^2 c_1)^2}$.
As a result, we obtain:
\begin{equation}\label{turhan:eqb}
\begin{array}{rcl} 
\displaystyle{
\frac{dw}{d\tau}
} &=& \displaystyle{
w'''
+\tilde\beta z''
-\frac32{w'}^2
+2\tilde\beta ww'
-2\tilde\beta w'z
+\tilde \alpha w'
-\tilde \alpha z'
+\frac12\tilde\alpha(w-z)^2
\,,}
\\
\displaystyle{
\frac{dz}{d\tau}
} &=& \displaystyle{
w''w
-w''z
+K\tilde\beta w''
-\frac12{w'}^2
-w'z'
+K\tilde\alpha w'
+2\tilde\beta w z'
} \\
&&
\displaystyle{
-2\tilde\beta zz'
-K\tilde\alpha z'
+\frac12\tilde\alpha(w-z)^2
\,,}
\end{array} 
\end{equation}
where
$\tilde\alpha=\frac{c(c\epsilon_1+\gamma\gamma_1)^4}
{\gamma^2(c^2\epsilon_1 +2c\gamma\gamma_1 -\gamma^2c_1)^2} \alpha$,  
$\tilde \beta=\frac{c(c\epsilon_1+\gamma\gamma_1)^2 }
{\gamma^2(c^2\epsilon_1 +2c\gamma\gamma_1 -\gamma^2c_1)} \beta$, 
$K=\frac{\gamma^2(c_1\epsilon_1+\gamma_1^2)}{(c\epsilon_1+\gamma\gamma_1)^2}$.
In the above equation, if $c_1\epsilon_1+\gamma_1^2=0$, 
switching the roles of $H_0$ and $H_1$ we get a two-sided Lenard-Magri scheme,
according to Remarks \ref{turhan:rem1} and \ref{turhan:rem2}, 
depending on whether $\epsilon_1=0$ or $\epsilon_1\neq0$, respectively. 
In the former case ($\epsilon_1=\gamma_1=0$)
the second side is related to the Hary Dym type of equation. 
In the latter case ($\epsilon_1\gamma_1\neq0$), and in the case $K\neq0$,
equation \eqref{turhan:eqb} seems to be new.
 
For $c\neq0$, $\gamma=0$, $\epsilon_1\neq 0$, the same transformation as above with  
$m_1=0$,
$m_2=c\epsilon_1$,
$m_3=c\epsilon_1$,
$m_4=-c\gamma_1$,
$k_x=-c$,
$k_t=\frac{1}{c^6\epsilon_1^3}$
gives:
\begin{equation}\label{6.5}
\begin{array}{rcl}
\displaystyle{
\frac{dw}{d\tau} 
}
&=&
\displaystyle{
w''' + \tilde\beta z'' + 2\tilde\beta ww'+ \tilde\alpha w',
}
\\
\displaystyle{
\frac{dz}{d\tau} 
}
&=&
\displaystyle{
w''w
+K\tilde\beta w''
-\frac12{w'}^2
+2\tilde\beta wz'
+\frac12\tilde\alpha w^2
\,,}
\end{array}
\end{equation}
where 
$\tilde\alpha=\frac{\alpha}{c^3}$,  
$\tilde \beta=\frac{\beta}{c^2}$, 
$K=\frac{c_1\epsilon_1+\gamma_1^2}{\epsilon_1^2}$. Since $\epsilon_1\neq0$ here, we have a second side of Lenard-Magri scheme by Remark~\ref{turhan:rem2} if $K=0$. 
This equation seems to be new. 

For $c\neq0$,  $\epsilon_1=\frac{\gamma\gamma_1}{c}$, $\gamma(c\gamma_1-c_1\gamma)\neq0$ 
the transformation with  
$m_1=\frac{\gamma}{c}(\gamma c_1-c\gamma_1)$,
$m_2=\gamma c_1-c\gamma_1$,
$m_3=0$,
$m_4=\frac{\gamma}{c(\gamma c_1-c\gamma_1)}$,
$k_x=\frac{\gamma^2}{c} $,
$k_t=\frac{c^6}{\gamma^9 (c\gamma_1-\gamma c_1)^3}$
gives:
\begin{equation}\label{6.6}
\begin{array}{rcl}
\displaystyle{
\frac{dw}{d\tau} 
}
&=&
\displaystyle{
w''' 
+ \tilde\beta z'' 
+ \frac32{w'}^2 
+ 2\tilde\beta zw' 
-\frac12\tilde\alpha z^2 ,
}
\\
\displaystyle{
\frac{dz}{d\tau} 
}
&=&
\displaystyle{
w''w
+K\tilde\beta w''
+w'z'
+2\tilde\beta zz'
-K \tilde\alpha z'
\,,}
\end{array}
\end{equation}
where 
$\tilde\alpha=\frac{c^3}{\gamma^6}\alpha$,  
$\tilde \beta=\frac{c^2}{\gamma^4}$, 
$K=\frac{\gamma^2\gamma_1}{c(c\gamma_1-\gamma c_1)}$.

In the case $c\neq0$, $\epsilon_1=\frac{-c_1\gamma^2}{c^2}$, $\gamma_1=\frac{\gamma c_1}{c}$, upon potentiation $u=w'$, equation \eqref{turhan:eq4}
becomes a second order equation of type considered in \cite{MSY87}.  



\begin{thebibliography}{25}

\bibitem[AF88]{AF88}
Antonowicz M., Fordy A.P.,
\emph{Coupled Harry Dym equations with multi-Hamiltonian structures},
J. Phys. A {\bf 21} (1988), no. 5, L269-L275.

\bibitem[ADCKP88]{ADCKP88}
Arbarello E., De Concini C., Kac V.G., Procesi C.,
\emph{Moduli spaces of curves and representation theory},
Comm. Math. Phys. {\bf 117} (1988), n.1, 1-36.

\bibitem[BDSK09]{BDSK09}
Barakat A., De Sole A., Kac V.G.,
\emph{Poisson vertex algebras in the theory of Hamiltonian equations},
Jpn. J. Math. {\bf 4} (2009), no. 2, 141-252.

\bibitem[CDSK13]{CDSK13}
Carpentier S., De Sole A., Kac V.G.,
\emph{Singular degree of a rational matrix pseudodifferential operator},
preprint arXiv:1308.2647, to appear in IMRN.

\bibitem[CDSK14]{CDSK14}
Carpentier S., De Sole A., Kac V.G.,
\emph{Rational matrix pseudodifferential operators},
Selecta Math. (N.S.) {\bf 20} (2014), n.2, 403-419.


\bibitem[DSK13]{DSK13}
De Sole A., Kac V.G.,
\emph{Non-local Poisson structures and applications to the theory of integrable systems},
Jpn. J. Math. {\bf 8} (2013), no. 2, 233-347.

\bibitem[DSKT13]{DSKT13}
De Sole A., Kac V.G., Turhan R.,
\emph{A new approach to the Lenard-Magri scheme of integrability},
preprint arXiv:1303.3438, to appear in Comm. Math. Phys.

\bibitem[FL96]{FL96}
Fokas A.S. and Liu Q.M.,
\emph{Asymptotic integrability of water waves},
Phys. Rev. Lett. {\bf 77} (1996), n.12, 2347-2351.

\bibitem[GN90]{GN90}
G{\"u}mral H. and Nutku Y.,
\emph{Multi-Hamiltonian structure of equations of hydrodynamic type},
J. Math. Phys. {\bf 31} (1990), n.11, 2606-2611.

\bibitem[Ito82]{Ito82}
Ito M., \emph{Symmetries and conservation laws of a coupled nonlinear wave equation},
Phys. Lett. A {\bf 91} (1982), no. 7, 335-338.

\bibitem[Kup85a]{Kup85a}
Kupershmidt B.A.,
\emph{Mathematics of dispersive water waves},
Comm. Math. Phys. {\bf 99} (1985), n.1, 51-73.

\bibitem[Kup85b]{Kup85b}
Kupershmidt B.A.,
\emph{A coupled Korteweg-de Vries equation with dispersion},
J. Phys. A {\bf 18} (1985), n.10, L571-L573.

\bibitem[Mag78]{Mag78}
Magri F.,
\emph{A simple model of the integrable Hamiltonian equation},
J. Math. Phys. {\bf 19} (1978), no. 5, 1156-1162.


\bibitem[MSY87]{MSY87}
Mikhailov A.V., Shabat A.B., Yamilov R.I.,
\emph{The symmetry approach to the classification of nonlinear equations. 
Complete lists of integrable systems.}
(Russian) Russian Math. Surveys {\bf 42} (1987), n.4, 1-63.

\end{thebibliography}
\end{document}